\DeclareMathOperator*{\argmin}{arg\,min}
\title[lambda PI]{Lambda-Policy Iteration with Randomization for Contractive Models\\with Infinite Policies: Well-Posedness and Convergence\\(Extended Version)}
\author{%
 \Name{Yuchao Li} \Email{yuchao@kth.se}\\
 \Name{Karl H. Johansson} \Email{kallej@kth.se}\\
 \Name{Jonas M\aa rtensson} \Email{jonas1@kth.se}\\
 \addr Division of Decision and Control Systems, KTH Royal Institute of Technology, Stockholm, Sweden%
}
\begin{document}

\maketitle

\begin{abstract}%
Abstract dynamic programming models are used to analyze
$\lambda$-policy iteration with randomization algorithms. Particularly, contractive models with infinite policies are considered and it is shown that well-posedness of the $\lambda$-operator plays a central role in the algorithm. The operator is known to be well-posed for problems with finite states, but our analysis shows that it is also well-defined for the contractive models with infinite states studied. Similarly, the algorithm we analyze is known to converge for problems with finite policies, but we identify the conditions required to guarantee convergence with probability one when the policy space is infinite regardless of the number of states. Guided by the analysis, we exemplify a data-driven approximated implementation of the algorithm for estimation of optimal costs of constrained linear and nonlinear control problems. Numerical results indicate potentials of this method in practice.         
\end{abstract}

\begin{keywords}%
$\lambda$-policy iteration, approximate dynamic programming, reinforcement learning 
\end{keywords}

\section{Introduction}
Temporal-difference (TD) learning is a prominent class of algorithms widely applied in reinforcement learning (RL). Its first formal treatment is given in \cite{sutton1988learning} where a family of algorithms, collectively known as TD($\lambda$), is analyzed in the context of absorbing Markov processes. By utilizing the properties of transitional matrices of the process, algorithm convergence guarantees are established. Structural relations between RL and dynamic programming (DP) was noted by \cite{watkins1989learning}, and foundations for the understanding of RL followed. The monograph by \cite{bertsekas1996neuro} puts a broad class of RL algorithms in the context of two principle methods of DP, viz., value iteration (VI) and policy iteration (PI), and collects a bundle of research outputs of interests.\footnote{A detailed document of the history can be found in \cite[Chapter 1]{sutton2018reinforcement}.} Among those results, the analysis of TD($\lambda$), originally given in \cite{bertsekas1996temporal}, unveils the underlying DP problem of TD($\lambda$). As is shown, the desired behavior of TD($\lambda$) is inherited from the parameter $\lambda$ being a discount factor in the classical DP sense and the infinite iterates of TD algorithms can be interpreted as an iteration of a compactly defined operator. In addition, TD($\lambda$) can be embedded into the PI framework, which is then named $\lambda$-PI. There has been a tremendous development in algorithms related to $\lambda$-PI, such as \cite{scherrer2010least,scherrer2015approximate}. A survey can be found in \cite{bertsekas2012lambda}. Most recently, the connection between TD($\lambda$) and proximal algorithms, which are widely used for solving convex optimization problems, is discussed in \cite{bertsekas2018proximal}. In light of such relation, $\lambda$-PI with randomization ($\lambda$-PIR) was proposed in \cite[Chaper 2]{bertsekas2018abstract}. The algorithm resembles the one proposed in \cite{yu2015mixed}, and offers a scheme to combine the fast computations by proximal algorithms with the convergence behavior by VI. Apart from these algorithmic properties, the abstract approach taken for analyzing $\lambda$-PIR is also well worth special attention. Although some operators, in particular the Bellman operator, are often used in algorithmic analysis, they played less of a central role throughout the development, cf. \cite{tsitsiklis1997analysis,de2003linear,wang2015approximate,bellemare2016increasing,bian2016value,banjac2019data}, in which operator computations are utilized while specific properties of the problem are also taken advantage of. An exception is the analysis of $\lambda$-PIR in \cite[Chaper 2]{bertsekas2018abstract}, which has solely relied on abstract operator properties. There are many advantages of such an approach, e.g., (a) it can single out the key factor that stands behind the desired behavior of the algorithm; (b) it can shed new lights on the understanding of some algorithms and help to bring together isolated methods; (c) it can help to safeguard the desired behaviors when modifying and generalizing algorithms. One example of this is by \cite{yu2018generalized}, in which the parameter $\lambda$ is extended to be state-dependent, while fundamental properties are still guaranteed.

In this paper, we use abstract DP models and extend $\lambda$-PIR for finite policy problems \cite[Chapter 2]{bertsekas2018abstract} to contractive models with infinite policies. A policy space can be infinite due to infinite states, or infinite control over some finite state space. We make the following main contributions:
\begin{itemize}[noitemsep,topsep=0pt]
    \item[(1)] We establish the well-posedness of the compact operator that plays a central role in the algorithm (Theorem~\ref{thm:2}). Our result relies solely on the contraction property of the model.
    \item[(2)] Conditions for convergence of $\lambda$-PIR for problems with infinite policies are given (Theorem~\ref{thm:42}). We show that such conditions can be dismissed if the underlying operator exhibits a linear structure (Corollary~\ref{cor:1}).
\end{itemize}

The rest of the paper is organized as follows: Section~\ref{sec:preli} gives a brief account of preliminaries of contractive models and existing results on $\lambda$-PIR. Section~\ref{sec:acma} presents results on well-posedness of the $\lambda$-PIR algorithm for infinite-state problems. Conditions for convergence of $\lambda$-PIR for problems with infinite policies are given in Section~\ref{sec:conv}. Section~\ref{sec:aadp} explains an approximated implementation of $\lambda$-PIR and shows its application when embedded in the approximate dynamic programming (ADP) framework. Section~\ref{sec:con} concludes the paper.

\section{Preliminaries}\label{sec:preli}

Here we introduce the concepts and some preliminaries related to contractive models, and the $\lambda$-PIR algorithm. The contents here are mostly taken from \cite{bertsekas2018abstract}. 

\subsection{Contractive models}
Given a state space $X$, a control space $U$, and for each $x\in X$ a nonempty control set $U(x)\subset U$, we denote $\mathcal{M}=\{\mu\,|\,\mu(x)\in U(x),\,\forall x\in X\}$ and name it as the set of policies whose elements are denoted by $\mu$. One can see that the set $\mathcal{M}$ can be viewed as the Cartesian product $\prod_{x\in X}U(x)$. We denote by $\mathcal{R}(X)$ the set of functions $J:X\to \mathbb{R}$ and by $\mathcal{E}(X)$ the set of functions $J:X\to \mathbb{R}^*$ where $\mathbb{R}^*=\mathbb{R}\cup \{\infty,-\infty\}$. We study the mappings of the form $H:X\times U\times \mathcal{R}(X)\to \mathbb{R}$. For every $\mu\in \mathcal{M}$, we define $T_\mu:\mathcal{R}(X)\to\mathcal{R}(X)$ as
\begin{equation}
    (T_\mu J)(x)=H(x,\mu(x),J),\;\forall x\in X,
\end{equation}
and the mapping $T:\mathcal{R}(X)\to\mathcal{E}(X)$ as
\begin{equation}
\label{eq:tdef}
    (T J)(x)=\inf_{\mu\in \mathcal{M}} (T_\mu J)(x),\;\forall x\in X.
\end{equation}
In view of the definitions of $\mathcal{M}$, $T_\mu$, and $T$, we have
\begin{equation}
\label{eq:cart}
    (T J)(x)=\inf_{u\in U(x)} H(x,u,J)=\inf_{\mu\in \mathcal{M}} H(x,\mu(x),J).
\end{equation}
Given some positive function $v:X\to\mathbb{R}$, we denote by $\mathcal{B}(X)$ the set of functions $J$ such that $\sup_{x\in X}\frac{|J(x)|}{v(x)}<\infty$. We define a norm $\Vert\cdot\Vert$ on $\mathcal{B}(X)$ as
\begin{equation*}
    \Vert J\Vert =\sup_{x\in X}\frac{|J(x)|}{v(x)}.
\end{equation*}

The following lemmas are classical results from functional analysis. The proof of the first can be found in \cite[Appendix B]{bertsekas2018abstract}, while the second is explained in \cite[Appendix A]{szepesvari2010algorithms}.
\begin{lemma}
$\mathcal{B}(X)$ is complete with respect to the metric induced by $\Vert\cdot\Vert$.
\end{lemma}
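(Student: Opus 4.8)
The plan is to follow the standard template for proving completeness of a weighted sup-norm space: start from an arbitrary Cauchy sequence, produce a candidate limit by exploiting completeness of $\mathbb{R}$ pointwise, verify the limit lies in $\mathcal{B}(X)$, and finally upgrade pointwise convergence to convergence in norm.

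First I would fix a Cauchy sequence $\{J_n\}\subset\mathcal{B}(X)$ with respect to $\Vert\cdot\Vert$. For each fixed $x\in X$, the positivity of $v(x)$ gives the bound $|J_n(x)-J_m(x)|\le v(x)\,\Vert J_n-J_m\Vert$, so that $\{J_n(x)\}$ is a Cauchy sequence of real numbers. Since $\mathbb{R}$ is complete, this sequence converges, and I define the candidate limit $J(x):=\lim_{n\to\infty}J_n(x)$ for every $x\in X$.

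Next I would check that $J\in\mathcal{B}(X)$. A Cauchy sequence is bounded in norm, so there is a constant $M$ with $\Vert J_n\Vert\le M$ for all $n$, equivalently $|J_n(x)|\le M\,v(x)$ for all $x$ and $n$. Passing to the pointwise limit in $n$ preserves this inequality, giving $|J(x)|\le M\,v(x)$, hence $\sup_{x\in X}|J(x)|/v(x)\le M<\infty$ and $J\in\mathcal{B}(X)$.

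Finally I would establish norm convergence $\Vert J_n-J\Vert\to 0$. Given $\epsilon>0$, the Cauchy property yields $N$ with $\Vert J_n-J_m\Vert\le\epsilon$ for all $n,m\ge N$, i.e. $|J_n(x)-J_m(x)|\le\epsilon\, v(x)$ for every $x$. Fixing $n\ge N$ and any $x$, I let $m\to\infty$ in this pointwise inequality to obtain $|J_n(x)-J(x)|\le\epsilon\, v(x)$; taking the supremum over $x$ then gives $\Vert J_n-J\Vert\le\epsilon$ for all $n\ge N$. The one step requiring care is this last interchange: I take the limit in $m$ at each fixed $x$ first, where it is a plain limit of reals, and only afterwards take the supremum over $x$, so no simultaneous exchange of supremum and limit is ever performed. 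This is the only place where the argument rises above bookkeeping, and it is what certifies that the candidate limit is an honest $\Vert\cdot\Vert$-limit rather than merely a pointwise one.
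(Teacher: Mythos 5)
Your proof is correct and follows exactly the classical weighted sup-norm completeness argument that the paper defers to (it cites \cite[Appendix B]{bertsekas2018abstract} rather than proving the lemma itself): pointwise Cauchy via positivity of $v$, membership of the limit in $\mathcal{B}(X)$ from boundedness of the Cauchy sequence, and the fix-$n$, let-$m\to\infty$, then-take-supremum step to upgrade to norm convergence. Nothing is missing, and your handling of the limit/supremum interchange is the same care the standard reference takes.
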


\begin{lemma}\label{thm:pre2}
Given a sequence $\{J_k\}\subset \mathcal{B}(X)$ and $J\in \mathcal{B}(X)$, if $J_k\to J$ in the sense that $\lim_{k\to\infty}\Vert J_k-J\Vert =0$, then $\lim_{k\to\infty}J_k(x)= J(x),\,\forall x\in X$. 
\end{lemma}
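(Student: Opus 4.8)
The plan is to exploit directly the definition of the weighted supremum norm $\Vert\cdot\Vert$ together with the fact that, for each fixed state $x\in X$, the weight $v(x)$ is a fixed positive real number. The guiding observation is that the pointwise deviation at any single state is controlled by the global norm up to this multiplicative constant, so that convergence in the $v$-weighted sup-norm is at least as strong as pointwise convergence.

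First I would fix an arbitrary $x\in X$. From the definition
\[
\Vert J_k-J\Vert=\sup_{y\in X}\frac{|J_k(y)-J(y)|}{v(y)},
\]
the supremum dominates the single term corresponding to $y=x$, so that
\[
\frac{|J_k(x)-J(x)|}{v(x)}\le\Vert J_k-J\Vert.
\]
Multiplying through by $v(x)>0$ then yields $|J_k(x)-J(x)|\le v(x)\,\Vert J_k-J\Vert$.

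Next, since $x$ is held fixed, $v(x)$ is a constant independent of $k$, and the hypothesis $\lim_{k\to\infty}\Vert J_k-J\Vert=0$ forces the right-hand side to zero as $k\to\infty$. Hence $\lim_{k\to\infty}|J_k(x)-J(x)|=0$, that is, $J_k(x)\to J(x)$. As $x\in X$ was arbitrary, the conclusion holds for every state, completing the argument.

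There is no substantive obstacle here: the entire content is the single bound $|J_k(x)-J(x)|\le v(x)\,\Vert J_k-J\Vert$. The only points to keep in mind are that $v$ is positive by assumption, so multiplying or dividing by $v(x)$ is legitimate, and that although the weight factor $v(x)$ is state-dependent, it is fixed once $x$ is chosen and therefore does not interfere with passing to the limit in $k$.
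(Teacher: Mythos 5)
Your proof is correct and is exactly the standard argument: the pointwise bound $|J_k(x)-J(x)|\le v(x)\,\Vert J_k-J\Vert$ for each fixed $x$, using positivity of the weight $v$. The paper itself does not spell out a proof (it defers to the cited reference, Appendix A of Szepesv\'ari's monograph), but the argument there is the same one you give, so there is nothing to add.
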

\begin{remark}
The converse of Lemma~\ref{thm:pre2} does not necessarily hold, see \cite[Appendix A]{szepesvari2010algorithms}.
\end{remark}

For the mappings $H$, $T_\mu$ and $T$ on $\mathcal{B}(X)$, we introduce the following standard assumptions. 
\begin{assum}[Well-posedness]\label{asm:pose}
$\forall J\in\mathcal{B}(X)$ and $\forall \mu\in\mathcal{M}$, $T_\mu J\in\mathcal{B}(X)$ and $T J\in\mathcal{B}(X)$.
\end{assum}

\begin{assum}[Uniform contraction]\label{asm:contra}
For some $\alpha\in(0,1)$, it holds that 
\begin{equation*}
    \Vert T_\mu J-T_\mu J'\Vert\leq \alpha \Vert J-J'\Vert,\;\forall J,J'\in\mathcal{B}(X),\,\mu\in\mathcal{M}.
\end{equation*}
\end{assum}
One immediate consequence of Assumption~\ref{asm:contra} is that $T$ is also a contraction with the same modulus $\alpha$, see \cite[Chapter 1]{bertsekas2018abstract}. When Assumptions~\ref{asm:pose} and \ref{asm:contra} hold, the following convergence result holds due to the fixed point theory.
\begin{lemma}[\cite{bertsekas2018abstract}, Proposition B.1]\label{thm:con}
Let Assumptions~\ref{asm:pose}, and \ref{asm:contra} hold. Then:
\begin{itemize}
    \item[(a)] There exist unique $J_\mu,J^*\in \mathcal{B}(X)$ such that $TJ^*=J^*;\;T_\mu J_\mu=J_\mu,\,\forall \mu\in \mathcal{M}$.
    \item[(b)] For arbitrary $J_0\in \mathcal{B}(X)$, the sequence $\{J_k\}$ where $J_{k+1}=T_\mu J_k$ converges in norm to $J_\mu$. 
    \item[(c)] For arbitrary $J_0\in \mathcal{B}(X)$, the sequence $\{J_k\}$ where $J_{k+1}=T J_k$ converges in norm to $J^*$. 
\end{itemize}
\end{lemma}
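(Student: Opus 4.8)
The plan is to recognize parts (a)--(c) as the standard conclusions of the Banach contraction mapping theorem, applied separately to each $T_\mu$ and to $T$ on the complete metric space $(\mathcal{B}(X),\Vert\cdot\Vert)$. The two structural assumptions supply exactly the hypotheses that theorem requires: Assumption~\ref{asm:pose} guarantees that $T_\mu$ and $T$ map $\mathcal{B}(X)$ into itself, so that the iterates stay in the space where completeness is available, and Assumption~\ref{asm:contra} (together with the first Lemma giving completeness) supplies the contraction property. So the whole statement reduces to invoking one classical theorem twice.

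First I would handle the policy operators $T_\mu$. By Assumption~\ref{asm:pose}, $T_\mu J\in\mathcal{B}(X)$ whenever $J\in\mathcal{B}(X)$, so $T_\mu$ is a self-map of $\mathcal{B}(X)$, and by Assumption~\ref{asm:contra} it contracts with modulus $\alpha<1$. Fixing any $J_0$, I would show the iterates $J_k=T_\mu^k J_0$ are Cauchy via the telescoping bound $\Vert J_{k+1}-J_k\Vert\le\alpha^k\Vert T_\mu J_0-J_0\Vert$ and summation of the resulting geometric series; completeness then produces a limit $J_\mu\in\mathcal{B}(X)$, continuity of the contraction identifies it as a fixed point, and the estimate $\Vert y-y'\Vert=\Vert T_\mu y-T_\mu y'\Vert\le\alpha\Vert y-y'\Vert$ forces uniqueness. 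This establishes the $T_\mu$ part of (a) and all of (b) at once.

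Next I would repeat the argument for $T$. Well-posedness again ensures $TJ\in\mathcal{B}(X)$, upgrading the generic codomain $\mathcal{E}(X)$ in~\eqref{eq:tdef} to $\mathcal{B}(X)$, so that $T$ too is a self-map of the complete space. The only step that is not a verbatim copy is checking that $T$ inherits the contraction modulus $\alpha$. For that I would use the elementary exchange-with-infimum inequality: for every $\mu$ and $x$, $(T_\mu J)(x)-(T_\mu J')(x)\le\alpha\Vert J-J'\Vert v(x)$ by Assumption~\ref{asm:contra}, whence taking the infimum over $\mu\in\mathcal{M}$ gives $(TJ)(x)\le(TJ')(x)+\alpha\Vert J-J'\Vert v(x)$; swapping the roles of $J$ and $J'$ and combining yields $|(TJ)(x)-(TJ')(x)|\le\alpha\Vert J-J'\Vert v(x)$, and dividing by $v(x)$ and taking the supremum over $x$ gives $\Vert TJ-TJ'\Vert\le\alpha\Vert J-J'\Vert$. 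This is precisely the consequence of Assumption~\ref{asm:contra} already noted before the lemma, so one may alternatively just cite it. With $T$ now a contraction self-map of a complete space, the same fixed-point argument delivers the unique $J^*$ and the convergence in (c).

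I do not expect a genuine obstacle, since the result is a direct instance of a classical theorem; the only points deserving care---and the ones I would state explicitly---are that Assumption~\ref{asm:pose} is what keeps the iterates inside $\mathcal{B}(X)$ rather than merely in $\mathcal{E}(X)$, and that the infimum-exchange step is what transfers the contraction property from the individual $T_\mu$ to $T$.
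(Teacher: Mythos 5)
Your proposal is correct and follows essentially the same route as the paper, which proves the lemma by citing the Banach fixed-point theorem on the complete space $(\mathcal{B}(X),\Vert\cdot\Vert)$ (via \cite[Proposition B.1]{bertsekas2018abstract}), with the contraction of $T$ obtained exactly as in your infimum-exchange step---a fact the paper notes just before the lemma as an immediate consequence of Assumption~\ref{asm:contra}. Nothing is missing; your explicit remark that Assumption~\ref{asm:pose} is what keeps iterates in $\mathcal{B}(X)$ rather than $\mathcal{E}(X)$ is the right point of care.
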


The above results are the backbones of VI. However, they do not guarantee the effectiveness of PI, for which we need some additional assumptions.  

\begin{assum}[Monotonicity]\label{asm:mon}
$\forall J,J'\in\mathcal{B}(X)$, it holds that $J\leq J'$ implies $H(x,u,J)\leq H(x,u,J'),\;\forall x\in X,\,u\in U(x)$, where $\leq$ indicates point-wise relation.
\end{assum}

\begin{assum}[Attainability]\label{asm:att}
For all $ J\in\mathcal{B}(X)$, there exists $\mu\in\mathcal{M}$, such that $T_\mu J=TJ$.
\end{assum}

In fact, only after including Assumption~\ref{asm:mon}, in addition to Assumptions~\ref{asm:pose} and \ref{asm:contra}, can $J^*$ be interpreted as optimal in the sense that $J^*(x)=\inf_{\mu\in\mathcal{M}}J_\mu(x)$. Besides, due to the nature of $\mathcal{M}$ being a Cartesian product of feasible control sets $U(x)$, for arbitrary small $\varepsilon>0$, we can always construct an $\varepsilon$-optimal policy $\mu_\varepsilon\in\mathcal{M}$ in the sense that $J_{\mu_\varepsilon}(x)\leq J^*(x)+\varepsilon$ holds for all $x$. One such construction in a more general setting can be found in \cite[Chapter 2]{bertsekas1978stochastic} and the details of the above discussion can be found in \cite[Propositions 2.1.1, 2.1.2]{bertsekas2018abstract}. Since the infimum in \eqref{eq:tdef} is not always attained, Assumption~\ref{asm:att} is needed for PI-based methods. In the subsequent sections, we always assume Assumption~\ref{asm:pose} hold, and therefore do not repeat it in all the theoretical statements.

\subsection{$\lambda$-PIR}\label{sec:ag}
The following $\lambda$-PIR algorithm is introduced in \cite[Chapter 2]{bertsekas2018abstract} in the abstract setting. Given some $\lambda\in[0,1)$, consider the mappings $T_\mu^{(\lambda)}$ with domain $\mathcal{B}(X)$ and defined point-wise by 
\begin{equation}
	\label{eq:tlambda}
	\big(T_\mu^{(\lambda)}J\big)(x)=(1-\lambda)\sum_{\ell=1}^\infty  \lambda^{\ell-1}\big(T_\mu^{\ell}J\big)(x),
\end{equation}
where $T^\ell_\mu$ denotes the $\ell$-fold composition of the operator $T_\mu$, and we refer to the operator $T_\mu^{(\lambda)}$ as $\lambda$ operator in our discussion. Regarding this operator, we make the following mild assumption, which holds for a broad class of DP problems.
\begin{assum}[Commutativeness]\label{asm:comm}
For every $\mu\in\mathcal{M}$, its corresponding $\lambda$ operator and $T_\mu$ commute, viz., for all $J\in\mathcal{B}(X)$, it holds that
\begin{equation*}
    T_\mu\big(T_\mu^{(\lambda)}J\big)=T_\mu^{(\lambda)}(T_\mu J).
\end{equation*}
\end{assum} 
Given $J_k\in\mathcal{B}(X)$ and $p_k\in (0,1)$, then the policy $\mu^k$ and cost approximate $J_{k+1}$ is computed as 
\begin{equation}
\label{eq:algm}
    T_{\mu^k}J_k = TJ_k;\,J_{k+1} =\begin{cases}
    T_{\mu^k}J_k,\,&\text{with prob. }p_k,\\
    T_{\mu^k}^{(\lambda)}J_k,\,&\text{with prob. }1-p_k,
    \end{cases}
\end{equation}
where the policy improvement step to the left is the same as in classical PI, while the evaluation step on the right is a randomized mix between VI and TD learning.  

We list the central statements related to $\lambda$-PIR presented in \cite[Chapter 2]{bertsekas2018abstract}, which include the assumptions needed and convergence behavior of the algorithm. Except the cases in which $U(x)$ is not singleton for finite number of $x$, which we refer to as trivial cases, $\mathcal{M}$ being finite implies state space being finite. Therefore, except the trivial cases, with the following finite policy assumption, the $\lambda$ operator $T^{(\lambda)}_\mu$ is ensured to be well-posed (see \cite[Proposition 2.1]{bertsekas2018proximal}), and the monotonicity of the underlying operator $H$ is not required for the desired behavior. 
\begin{assum}[Finiteness]\label{asm:fit}
$\mathcal{M}$ is finite.
\end{assum}
Then, the following result holds.
\begin{theorem}[\cite{bertsekas2018abstract}, Section 2.5.3]\label{thm:pre}
Let Assumptions~\ref{asm:contra}, \ref{asm:att}, and \ref{asm:fit} hold. $\forall J_0\in \mathcal{B}(X)$, the sequence $\{J_k\}$ generated by $\lambda$-PIR \eqref{eq:algm} converges in norm to $J^*$ with probability one.
\end{theorem}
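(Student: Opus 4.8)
The plan is to combine a deterministic local contraction argument near $J^*$ with a probabilistic argument showing the iterate reaches a neighborhood of $J^*$ almost surely. First I would record the basic properties of the $\lambda$ operator. Since each $T_\mu$ is an $\alpha$-contraction (Assumption~\ref{asm:contra}), the $\ell$-fold composition $T_\mu^\ell$ is an $\alpha^\ell$-contraction, and summing the geometric weights in \eqref{eq:tlambda} shows that $T_\mu^{(\lambda)}$ is well-posed and a contraction of modulus $\beta := (1-\lambda)\alpha/(1-\lambda\alpha) \le \alpha < 1$, with the same fixed point $J_\mu$ as $T_\mu$ (because $T_\mu^\ell J_\mu = J_\mu$ for every $\ell$). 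Using $T_{\mu^k}J_k = TJ_k$ from the policy-improvement step in \eqref{eq:algm}, a value-iteration update obeys $\|J_{k+1}-J^*\| = \|TJ_k - TJ^*\| \le \alpha\|J_k-J^*\|$, while a $\lambda$ update obeys $\|J_{k+1}-J^*\| \le \beta\|J_k - J_{\mu^k}\| + \|J_{\mu^k}-J^*\|$. Since $\mathcal{M}$ is finite (Assumption~\ref{asm:fit}), $D := \max_{\mu}\|J_\mu - J^*\| < \infty$, so in either case $\|J_{k+1}-J^*\| \le \alpha\|J_k-J^*\| + (1+\beta)D$; iterating this bound shows $\sup_k\|J_k-J^*\| \le M$ for a finite deterministic constant $M$, so the iterates stay bounded along every sample path.

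Next I would isolate an attracting neighborhood of $J^*$. The key claim is that there is a radius $r>0$ such that whenever $\|J-J^*\|\le r$, every policy $\mu$ that is greedy at $J$ (i.e.\ $T_\mu J = TJ$, which is attained by Assumption~\ref{asm:att}) satisfies $T_\mu J^* = J^*$, hence $J_\mu = J^*$. I would prove this by contradiction using finiteness and continuity: if no such $r$ existed, there would be candidates $J_n\to J^*$ and greedy policies $\mu_n$ with $J_{\mu_n}\neq J^*$; since $\mathcal{M}$ is finite, one policy $\mu$ recurs infinitely often with $T_\mu J_n = TJ_n$, and letting $n\to\infty$, the Lipschitz continuity of $T_\mu$ and $T$ forces $T_\mu J^* = TJ^* = J^*$, contradicting $J_\mu\neq J^*$. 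Granting the claim, set $N := \{J : \|J-J^*\|\le r\}$. On $N$ the greedy policy has $J_{\mu^k}=J^*$, so the value-iteration update contracts by $\alpha$ and the $\lambda$ update, now acting through an operator whose fixed point is exactly $J^*$, contracts by $\beta$; hence $\|J_{k+1}-J^*\|\le\alpha\|J_k-J^*\|$ in both branches, $N$ is invariant, and once $J_k\in N$ one has $J_k\to J^*$ in norm along every continuation.

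Finally I would argue that $J_k$ enters $N$ with probability one. Because $\sup_k\|J_k-J^*\|\le M$, choosing $n$ with $\alpha^n M \le r$ guarantees that any block of $n$ consecutive value-iteration updates drives the iterate into $N$. Partitioning time into disjoint blocks of length $n$, the events ``all updates in this block are value-iteration updates'' are independent with probabilities $\prod p_k$, so a second Borel--Cantelli argument shows that, provided these block probabilities are not summable (guaranteed when the $p_k$ are bounded away from $0$), such a block occurs almost surely; the iterate then enters the invariant set $N$ and converges to $J^*$ by the previous step. I expect this last step to be the main obstacle: the $\lambda$ updates need not move toward $J^*$ outside $N$ (this is precisely why the conclusion is only almost sure rather than deterministic), so the argument must exploit the clustering of value-iteration updates rather than their mere recurrence, and it is here that a quantitative condition on the randomization probabilities $\{p_k\}$ enters.
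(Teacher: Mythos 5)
Your proposal is correct, but note that the paper does not actually prove Theorem~\ref{thm:pre}; it quotes it from \cite[Section~2.5.3]{bertsekas2018abstract}, so the relevant comparison is with that cited argument, which your proposal essentially reconstructs: pathwise boundedness of $\{J_k\}$ via $D=\max_\mu\Vert J_\mu-J^*\Vert<\infty$ (finiteness of $\mathcal{M}$), a radius $r>0$ inside which finiteness and continuity force every greedy policy to satisfy $J_\mu=J^*$ (so both branches contract toward $J^*$ and the ball is invariant), and a second Borel--Cantelli argument producing, almost surely, a run of $n$ consecutive VI steps with $\alpha^n M\le r$. Two remarks. First, your caveat about the randomization probabilities is well taken: the cited source runs the algorithm with a constant $p\in(0,1)$, so the disjoint-block success probabilities $p^n$ are trivially nonsummable; with a genuine sequence $p_k$ as written in \eqref{eq:algm}, your nonsummability condition (e.g., $p_k$ bounded away from zero) is genuinely needed, since if $p_k\to 0$ fast the algorithm degenerates almost surely into pure $\lambda$-PI, which can fail without monotonicity --- the very counterexample motivating randomization in the source. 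Second, it is worth contrasting your route with the paper's own proof of the infinite-policy generalization, Theorem~\ref{thm:42}: there finiteness is unavailable (so $D$ may be infinite and the attracting-neighborhood claim fails), and the neighborhood/Borel--Cantelli mechanism is replaced by monotonicity (Assumption~\ref{asm:mon}), commutativity (Assumption~\ref{asm:comm}), and the initialization $TJ_0\le J_0$, which sandwich $J_k$ between $J^*$ and $T^kJ_0$ along every sample path. Your argument buys freedom from monotonicity and from any initialization condition at the price of a finite policy space, which is exactly the trade-off the paper is built around.
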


\section{Well-posedness of $T^{(\lambda)}_\mu$}\label{sec:acma}

We first show a general result, and then show that well-posedness of $T_\mu^{(\lambda)}$ is a consequence of it. For the more general operator, we prove first the output of the operator is well-defined within $\mathcal{R}(x)$, viz., point-wise limits do exist in $\mathbb{R}$. Then we show the output function scaled by the weight function $v(x)$ is bounded, which means that it is an element of $\mathcal{B}(X)$. Then we show the $\lambda$ operator $T^{(\lambda)}_\mu$ is a special case of the proved results. In addition, we explore the relation between the operator defined point-wise and the one by functional sequence, and give an illustrative example to show the difference between them.
\begin{lemma}\label{thm:1}
	Let the set of mappings $T_\mu:\mathcal{B}(X)\to \mathcal{B}(X)$, $\mu\in\mathcal{M}$, satisfy Assumption~\ref{asm:contra}. Consider the mappings $T_\mu^{(w)}$ with domain $\mathcal{B}(X)$ defined point-wise by 
	\begin{equation}
	\label{eq:thm1def}
	\big(T_\mu^{(w)}J\big)(x)=\sum_{\ell=1}^\infty  w_\ell(x)\big(T_\mu^{\ell}J\big)(x),\;x\in X,\,J\in\mathcal{B}(X),
	\end{equation}
	where $w_\ell(x)$ are nonnegative scalars such that for all $x\in X$, $\sum_{\ell=1}^\infty w_\ell(x)=1$. Then the mapping $T_\mu^{(w)}$ is well-defined; namely, for all $x\in X$, $J\in\mathcal{B}(X)$, the sequence 
	\begin{equation}
	\label{eq:thm1}
	\Big\{\sum_{\ell=1}^n  w_\ell(x)\big(T_\mu^{\ell}J\big)(x) \Big\}_{n=1}^\infty
	\end{equation}
	converges with a limit in $\mathbb{R}$, viz., $T_\mu^{(w)}:\mathcal{B}(X)\to\mathcal{R}(X)$.
\end{lemma}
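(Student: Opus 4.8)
The plan is to fix $x\in X$ and $J\in\mathcal{B}(X)$ and reduce the claim to the convergence of a single scalar series whose terms turn out to be absolutely summable. The only genuine input needed is a bound on the orbit $\{T_\mu^\ell J\}_\ell$ that is \emph{uniform in} $\ell$; once this is secured, the fact that the weights $w_\ell(x)$ form a probability distribution over $\ell$ does the rest through a comparison argument.

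First I would establish the uniform bound. Since $T_\mu:\mathcal{B}(X)\to\mathcal{B}(X)$ satisfies Assumption~\ref{asm:contra}, Lemma~\ref{thm:con} provides a unique fixed point $J_\mu\in\mathcal{B}(X)$ with $T_\mu J_\mu=J_\mu$. Iterating the contraction inequality gives $\Vert T_\mu^\ell J - J_\mu\Vert = \Vert T_\mu^\ell J - T_\mu^\ell J_\mu\Vert \le \alpha^\ell\Vert J-J_\mu\Vert$, so by the triangle inequality $\Vert T_\mu^\ell J\Vert \le \Vert J_\mu\Vert + \Vert J-J_\mu\Vert =: C$ for every $\ell\ge 1$, with $C$ independent of $\ell$. (Alternatively, a telescoping estimate $\Vert T_\mu^\ell J\Vert \le \Vert J\Vert + \Vert T_\mu J - J\Vert/(1-\alpha)$ yields the same conclusion using Assumption~\ref{asm:contra} alone.) Unwinding the definition of the weighted sup-norm $\Vert\cdot\Vert$ then translates this into the pointwise bound $|(T_\mu^\ell J)(x)|\le C\,v(x)$ valid for every $\ell$ and every $x\in X$.

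Next I would use this bound to control the series term by term. For the fixed $x$, the associated series of absolute values satisfies $\sum_{\ell=1}^\infty w_\ell(x)\,|(T_\mu^\ell J)(x)| \le C\,v(x)\sum_{\ell=1}^\infty w_\ell(x) = C\,v(x) < \infty$, where the last equality uses the hypothesis $\sum_\ell w_\ell(x)=1$ and the finiteness of $v(x)$ (recall $v$ is a real-valued positive function). Hence the series $\sum_\ell w_\ell(x)(T_\mu^\ell J)(x)$ converges absolutely, so the partial sum sequence in \eqref{eq:thm1} converges to a limit in $\mathbb{R}$. As $x$ and $J$ were arbitrary, this establishes that $T_\mu^{(w)}:\mathcal{B}(X)\to\mathcal{R}(X)$ is well-defined.

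The main obstacle, such as it is, lies in isolating the $\ell$-uniform estimate $|(T_\mu^\ell J)(x)|\le C\,v(x)$: mere membership $T_\mu^\ell J\in\mathcal{B}(X)$ for each individual $\ell$ (Assumption~\ref{asm:pose}) is insufficient, since the per-$\ell$ norms could a priori grow without bound and render the weighted series divergent. It is precisely the contraction property---confining the entire orbit to a norm ball around $J_\mu$---that supplies the single constant $C$ on which the comparison test rests. I note that monotonicity (Assumption~\ref{asm:mon}) is never invoked, consistent with the paper's emphasis that well-posedness of these operators depends on contraction alone.
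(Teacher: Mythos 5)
Your proof is correct, and it finishes by a genuinely different argument than the paper's. The paper also starts by bounding the orbit, but it does so qualitatively: it invokes norm convergence $T_\mu^\ell J\to J_\mu$ together with Lemma~\ref{thm:pre2} to get pointwise convergence $(T_\mu^\ell J)(x)\to J_\mu(x)$, hence a per-$x$ bound $M_\mu(x)$, and then deduces convergence of the weighted partial sums by a three-way case analysis on the sign of $J_\mu(x)$ --- monotone bounded partial sums (beyond some $N$) when $J_\mu(x)>0$ or $J_\mu(x)<0$, and a separate Cauchy-tail estimate when $J_\mu(x)=0$. You instead extract the quantitative, $\ell$-uniform estimate $\Vert T_\mu^\ell J\Vert\leq \Vert J_\mu\Vert+\Vert J-J_\mu\Vert=:C$ directly from the contraction inequality, unwind the weighted norm to get $|(T_\mu^\ell J)(x)|\leq C\,v(x)$, and conclude by absolute convergence via comparison with $C\,v(x)\sum_\ell w_\ell(x)$. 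This is cleaner: absolute convergence dispenses with the sign-based case split automatically, and your bound, being uniform in $x$ after scaling by $v$, gives in the limit $|(T_\mu^{(w)}J)(x)|\leq C\,v(x)$, i.e., $\Vert T_\mu^{(w)}J\Vert\leq C$, so your lemma proof already delivers the membership claim $T_\mu^{(w)}J\in\mathcal{B}(X)$ of Theorem~\ref{thm:2}, which the paper establishes separately through the fixed-point identity $T_\mu^{(w)}J_\mu=J_\mu$. The only thing the paper's route buys is that it consumes less than contraction --- it needs only pointwise convergence and boundedness of the orbit $\{(T_\mu^\ell J)(x)\}$ --- so it would survive under assumptions weaker than Assumption~\ref{asm:contra}; but since the lemma hypothesizes the uniform contraction anyway, that extra generality is unused, and your comparison-test argument is the more economical of the two.
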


\begin{proof}
	Since $T_\mu$ is a contraction, we have $\big(T_\mu^\ell J\big)(x)\to J_\mu(x)\in\mathbb{R},\,\forall x\in X$ due to Lemma~\ref{thm:pre2}. Therefore, $\big\{\big(T_\mu^\ell J\big)(x)\big\}_{\ell=1}^\infty$ is bounded. Denote the bound as $M_\mu(x)\in\mathbb{R}$. Then $\forall n$, it holds that 
	\begin{align*}
	    \big|\sum_{\ell=1}^n  w_\ell(x)\big(T_\mu^{\ell}J\big)(x)\big|&\leq \sum_{\ell=1}^n w_\ell(x)\big|  \big(T_\mu^{\ell}J\big)(x)\big|\\
	    &\leq \sum_{\ell=1}^n w_\ell(x)M_\mu(x)\\
	    &\leq M_\mu(x)
	\end{align*}
	namely the sequence of \eqref{eq:thm1} is bounded.
	If $J_\mu(x)>0$, then $\exists N$ such that $\big(T_\mu^\ell J\big)(x)>0$ $\forall \ell>N$. Therefore, $\big\{\sum_{\ell=1}^n  w_\ell(x)\big(T_\mu^{\ell}J\big)(x) \big\}_{n=N}^\infty$ is monotonically nondecreasing and bounded by $M_\mu(x)$. Therefore the sequence \eqref{eq:thm1} converges with the limit $\sum_{\ell=1}^\infty  w_\ell(x)\big(T_\mu^{\ell}J\big)(x)\in \mathbb{R}$. If $J_\mu(x)<0$, similar arguments applies. If $J_\mu(x)=0$, then $\forall \varepsilon$, $\exists N$ such that $\forall \ell>N$, $\big|\big(T_\mu^{\ell}J\big)(x)\big|<\varepsilon$. Therefore, $\forall k$, it holds that
	\begin{align}
	&\Big|\sum_{\ell=1}^N w_\ell(x)\big(T_\mu^{\ell}J\big)(x)-\sum_{\ell=1}^{N+k} w_\ell(x)\big(T_\mu^{\ell}J\big)(x)\Big|\nonumber\\
	=&\Big|\sum_{\ell=N+1}^{N+k} w_\ell(x)\big(T_\mu^{\ell}J\big)(x)\Big|\nonumber\\
	\leq& \sum_{\ell=N+1}^{N+k} w_\ell(x)\big|\big(T_\mu^{\ell}J\big)(x)\big|\nonumber\\
	\leq& \sum_{\ell=N+1}^{N+k} w_\ell(x)\varepsilon\nonumber\\
	\leq& \varepsilon,\nonumber
	\end{align}
	which implies that the sequence \eqref{eq:thm1} is Cauchy. As a result, sequence \eqref{eq:thm1} converges in $\mathbb{R}$. Therefore, $\forall J\in\mathcal{B}(X)$, $x\in X$, sequence \eqref{eq:thm1} converges in $\mathbb{R}$. Namely $T_\mu^{(w)}:\mathcal{B}(X)\to\mathcal{R}(X)$. 
\end{proof}

\begin{theorem}\label{thm:2}
	Let the set of mappings $T_\mu:\mathcal{B}(X)\to \mathcal{B}(X)$, $\mu\in\mathcal{M}$, satisfy Assumption~\ref{asm:contra}. Consider the mappings $T_\mu^{(w)}:\mathcal{B}(X)\to\mathcal{R}(X)$ defined in Eq.~\eqref{eq:thm1def}. Then the range of $T_\mu^{(w)}$ is a subset of $\mathcal{B}(X)$, viz., $T_\mu^{(w)}:\mathcal{B}(X)\to\mathcal{B}(X)$; and $T_\mu^{(w)}$ is a contraction.
\end{theorem}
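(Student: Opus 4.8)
The plan is to derive both claims---that the range of $T_\mu^{(w)}$ lies in $\mathcal{B}(X)$ and that $T_\mu^{(w)}$ is a contraction---from the single estimate $\Vert T_\mu^\ell J - T_\mu^\ell J'\Vert \leq \alpha^\ell \Vert J - J'\Vert$, obtained by applying Assumption~\ref{asm:contra} $\ell$ times, together with the normalization $\sum_{\ell=1}^\infty w_\ell(x) = 1$. The anchor for both parts is the unique fixed point $J_\mu \in \mathcal{B}(X)$ of $T_\mu$, which exists because $T_\mu$ is a contraction on the complete space $\mathcal{B}(X)$ (cf. Lemma~\ref{thm:con}(a)); I would use it to obtain bounds that are uniform in $\ell$.

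For the range claim I would first bound $\frac{|(T_\mu^\ell J)(x)|}{v(x)}$ uniformly over $\ell$. By the triangle inequality and the contraction estimate,
\[
\frac{|(T_\mu^\ell J)(x)|}{v(x)} \leq \frac{|J_\mu(x)|}{v(x)} + \Vert T_\mu^\ell J - J_\mu\Vert \leq \Vert J_\mu\Vert + \alpha^\ell \Vert J - J_\mu\Vert \leq \Vert J_\mu\Vert + \Vert J - J_\mu\Vert,
\]
the last step using $\alpha^\ell \leq 1$. Thus $|(T_\mu^\ell J)(x)| \leq C_J\, v(x)$ with $C_J := \Vert J_\mu\Vert + \Vert J - J_\mu\Vert$ independent of both $\ell$ and $x$. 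Multiplying by $w_\ell(x)$, summing, and using $\sum_\ell w_\ell(x) = 1$ gives $|(T_\mu^{(w)} J)(x)| \leq C_J\, v(x)$, hence $\Vert T_\mu^{(w)} J\Vert \leq C_J < \infty$ and $T_\mu^{(w)} J \in \mathcal{B}(X)$.

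For the contraction claim I would estimate the difference pointwise and then pass to the norm:
\[
\frac{|(T_\mu^{(w)} J)(x) - (T_\mu^{(w)} J')(x)|}{v(x)} \leq \sum_{\ell=1}^\infty w_\ell(x)\, \frac{|(T_\mu^\ell J)(x) - (T_\mu^\ell J')(x)|}{v(x)} \leq \Big(\sum_{\ell=1}^\infty w_\ell(x)\, \alpha^\ell\Big) \Vert J - J'\Vert.
\]
The decisive point is that, since the sum begins at $\ell = 1$ and $\alpha < 1$, we have $\alpha^\ell \leq \alpha$ for every $\ell \geq 1$, so $\sum_\ell w_\ell(x)\, \alpha^\ell \leq \alpha$. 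Taking the supremum over $x$ yields $\Vert T_\mu^{(w)} J - T_\mu^{(w)} J'\Vert \leq \alpha \Vert J - J'\Vert$, so $T_\mu^{(w)}$ is a contraction with the same modulus $\alpha$ as $T_\mu$.

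I expect the only real subtlety to be justifying the termwise manipulations for the infinite series---in particular, that the uniform-in-$\ell$ bound and the pointwise difference estimate survive the passage $n \to \infty$. This is precisely where Lemma~\ref{thm:1} does the work: it guarantees the partial sums converge in $\mathbb{R}$, so the inequalities established for every partial sum carry over to the limit. The structural reason the argument closes is the combination of the normalized weights (so the bounds aggregate to a finite constant rather than diverging) and $\alpha^\ell \leq \alpha$ (so the contraction modulus is recovered exactly); neither ingredient alone suffices.
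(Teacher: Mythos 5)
Your proof is correct and follows essentially the same route as the paper: anchor at the fixed point $J_\mu\in\mathcal{B}(X)$, apply the termwise contraction estimate $\Vert T_\mu^\ell J-T_\mu^\ell J_\mu\Vert\leq\alpha^\ell\Vert J-J_\mu\Vert$, and use the normalization $\sum_\ell w_\ell(x)=1$ to aggregate, with Lemma~\ref{thm:1} justifying the passage from partial sums to the limit. The one genuine difference is that you prove the contraction property in-line, whereas the paper derives only the estimate $\Vert T_\mu^{(w)}J-J_\mu\Vert\leq\bar\alpha\Vert J-J_\mu\Vert$ toward the fixed point and outsources the full two-function contraction to references (Bertsekas, Exercise~1.3, and Yu~2012); your direct two-function estimate is exactly the same computation as the paper's Eq.~\eqref{eq:wcon} bound and makes the theorem self-contained, which is a small improvement. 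Note only that your stated modulus $\alpha$ is slightly coarser than the paper's $\bar\alpha=\sup_{x\in X}\sum_{\ell=1}^\infty w_\ell(x)\alpha^\ell\leq\alpha$, which can be strictly smaller (e.g.\ it yields $\alpha_\lambda=\alpha(1-\lambda)/(1-\lambda\alpha)$ in Corollary~\ref{cor:53}); both suffice for the claim that $T_\mu^{(w)}$ is a contraction, but the sharper constant matters downstream.
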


\begin{proof}
	Due to Lemma~\ref{thm:1}, $\forall J\in\mathcal{B}(X)$ and $x\in X$, $\big(T_\mu^{(w)}J\big)(x)$ is well-defined and is a real value. In particular, for $J=J_\mu$, we have $T_\mu^{(w)}J_\mu=J_\mu$ (one may verify this equality by checking the definition Eq.~\eqref{eq:thm1def}). Then we have 
	\begin{align*}
	&\big|\big(T_\mu^{(w)}J\big)(x)-J_\mu(x) \big|\\
	=&\Big|\sum_{\ell=1}^\infty w_\ell(x)\big(T_\mu^{\ell}J\big)(x)-J_\mu(x) \Big|\\
	=&\Big|\sum_{\ell=1}^\infty w_\ell(x)\big(T_\mu^{\ell}J\big)(x)-\sum_{\ell=1}^{\infty} w_\ell(x)\big(T_\mu^{\ell}J_\mu\big)(x) \Big|\\
	=&\Big|\sum_{\ell=1}^\infty w_\ell(x)\Big(\big(T_\mu^{\ell}J\big)(x)- \big(T_\mu^{\ell}J_\mu\big)(x)\Big) \Big|\\
	\leq& \sum_{\ell=1}^\infty w_\ell(x) \big| \big(T_\mu^{\ell}J\big)(x)- \big(T_\mu^{\ell}J_\mu\big)(x)\big|.
	\end{align*}
	Since $T_\mu$ is a contraction, $\forall \ell$, it holds that 
	$$\big| \big(T_\mu^{\ell}J\big)(x)- \big(T_\mu^{\ell}J_\mu\big)(x)\big|\leq \alpha^\ell \Vert J-J_\mu\Vert v(x).$$
	Therefore, we have 
	\begin{align}
	\label{eq:thm2}
	\big|\big(T_\mu^{(w)}J\big)(x)-J_\mu(x) \big|&\leq \sum_{\ell=1}^\infty w_\ell(x) \alpha^\ell \Vert J-J_\mu\Vert v(x)\nonumber\\
	&\leq \Bar{\alpha} \Vert J-J_\mu\Vert v(x)
	\end{align}
	where $\Bar{\alpha}$ is given as
	\begin{equation}
	\label{eq:wcon}
	    \Bar{\alpha}=\sup_{x\in X}\sum_{\ell=1}^\infty w_\ell(x) \alpha^\ell \leq \alpha.
	\end{equation}
	Note that for all $x\in X$, the sequence $\big\{\sum_{\ell=1}^n w_\ell(x) \alpha^\ell\big\}_{n=1}^\infty$ converges in real since it's monotonically nondecreasing and upper bounded by $\alpha$. Therefore $\bar{\alpha}$ is well-defined. Due to triangular inequality, from Eq.~\eqref{eq:thm2}, we have 
	$$\frac{\big|\big(T_\mu^{(w)}J\big)(x) \big|}{v(x)}\leq \Bar{\alpha} \Vert J-J_\mu\Vert +\frac{|J_\mu(x) |}{v(x)}.$$
	Take supremum over $x$ on both sides and due to $J_\mu\in \mathcal{B}(X)$, we have $T_\mu^{(w)}J\in\mathcal{B}(X)$. The contraction proof can be found in \cite[Exercise 1.3]{bertsekas2018abstract} and \cite{yu2012weighted}.
\end{proof}

\begin{corollary}\label{cor:53}
Let the set of mappings $T_\mu:\mathcal{B}(X)\to \mathcal{B}(X)$, $\mu\in\mathcal{M}$, satisfy Assumption~\ref{asm:contra}. The operator $T^{(\lambda)}_\mu$ defined point-wise by Eq.~\eqref{eq:tlambda} is well-posed in the sense that $T^{(\lambda)}_\mu J\in\mathcal{B}(x)$ for all $J\in\mathcal{B}(x)$, and $T^{(\lambda)}_\mu$ is a contraction with modulus $\alpha_\lambda=\alpha(1-\lambda)/(1-\lambda\alpha)$.
\end{corollary}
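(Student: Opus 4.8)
The plan is to recognize $T^{(\lambda)}_\mu$ as a special instance of the general weighted operator $T^{(w)}_\mu$ already treated in Theorem~\ref{thm:2}, and then simply to invoke that theorem. Comparing the point-wise definition \eqref{eq:tlambda} with the definition of $T^{(w)}_\mu$, I would set the weights to $w_\ell(x) = (1-\lambda)\lambda^{\ell-1}$, which happen to be constant in $x$. The entire corollary should then follow by a routine specialization, the only real content being the identification of the exact modulus.

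First I would verify that these weights satisfy the two hypotheses required in Lemma~\ref{thm:1} and Theorem~\ref{thm:2}, namely nonnegativity and normalization. Nonnegativity is immediate from $\lambda \in [0,1)$. For the normalization I would evaluate the geometric series $\sum_{\ell=1}^\infty (1-\lambda)\lambda^{\ell-1} = (1-\lambda)\cdot\tfrac{1}{1-\lambda} = 1$, which converges precisely because $\lambda < 1$. With the hypotheses confirmed, Theorem~\ref{thm:2} applies verbatim and yields both that $T^{(\lambda)}_\mu$ maps $\mathcal{B}(X)$ into $\mathcal{B}(X)$ and that it is a contraction, establishing the well-posedness claim at once.

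It then remains only to pin down the contraction modulus. For this I would specialize the expression \eqref{eq:wcon}, that is $\bar\alpha = \sup_{x\in X}\sum_{\ell=1}^\infty w_\ell(x)\alpha^\ell$, to the present weights. Substituting $w_\ell(x)=(1-\lambda)\lambda^{\ell-1}$ and factoring out the common term reduces the sum to $\alpha(1-\lambda)\sum_{m=0}^\infty (\lambda\alpha)^m$, a second geometric series whose value is $\alpha(1-\lambda)/(1-\lambda\alpha)$ and which converges since $\lambda\alpha < 1$. Because the weights carry no dependence on $x$, the supremum over $x$ is attained trivially and equals this same value, giving $\alpha_\lambda = \alpha(1-\lambda)/(1-\lambda\alpha)$ as claimed.

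There is no genuine obstacle here: the corollary is a direct specialization of Theorem~\ref{thm:2}, and the only computations are the two elementary geometric-series evaluations above. The single point deserving a moment's care is checking that both series converge, which is guaranteed respectively by $\lambda < 1$ and $\lambda\alpha < 1$; once these are noted the statement follows without further work.
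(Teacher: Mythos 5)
Your proposal is correct and follows essentially the same route as the paper: the paper likewise sets $w_\ell(x)=(1-\lambda)\lambda^{\ell-1}$, notes the weights sum to one, invokes Theorem~\ref{thm:2} for well-posedness and contraction, and evaluates the modulus from Eq.~\eqref{eq:wcon} via the geometric series $(1-\lambda)\sum_{\ell=1}^\infty \lambda^{\ell-1}\alpha^\ell=\alpha(1-\lambda)/(1-\lambda\alpha)$. Your additional remarks on nonnegativity and the convergence conditions $\lambda<1$ and $\lambda\alpha<1$ are merely a more explicit rendering of the same argument.
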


\begin{proof}
By setting $w_\ell(x)=\lambda^{\ell-1}(1-\lambda)$ for all $x\in X$, it holds that $\sum_{\ell=1}^\infty w_\ell(x)=1$. In view of Theorem~\ref{thm:2}, we have that $T^{(\lambda)}_\mu:\mathcal{B}(X)\to\mathcal{B}(X)$ is a contraction. In addition, by Eq.~\eqref{eq:wcon}, its contraction modulus can be computed as 
\begin{equation}
    \alpha_\lambda=(1-\lambda)\sum_{\ell=1}^\infty \lambda^{\ell-1}\alpha^\ell =\frac{\alpha(1-\lambda)}{1-\lambda\alpha}.
\end{equation}
\end{proof}

The following result shows that the operator $T^{(\lambda)}_\mu$ defined point-wise is no difference compared one defined by convergence in norm.
\begin{lemma}\label{thm:53}
	Let the set of mappings $T_\mu:\mathcal{B}(X)\to \mathcal{B}(X)$, $\mu\in\mathcal{M}$, satisfy Assumption~\ref{asm:contra}. Consider sequence $\{T^{(\lambda_n)}_\mu J\}$ defined by
	\begin{equation*}
	    T^{(\lambda_n)}_\mu J=(1-\lambda)\sum_{\ell=1}^n  \lambda^{\ell-1}T_\mu^{\ell}J.
	\end{equation*}
	The sequence $\{T^{(\lambda_n)}_\mu J\}$ converges to some element $T^{(\lambda_\infty)}_\mu J\in\mathcal{B}(X)$. In addition, it coincides with the function $T^{(\lambda)}_\mu J$ defined by point-wise limit, viz., $T^{(\lambda_\infty)}_\mu J=T^{(\lambda)}_\mu J$.
\end{lemma}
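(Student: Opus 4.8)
The plan is to prove the two claims in turn: first that the functional partial sums $T^{(\lambda_n)}_\mu J$ converge in the Banach space $(\mathcal{B}(X),\Vert\cdot\Vert)$, and then that their norm limit agrees with the function $T^{(\lambda)}_\mu J$ obtained pointwise in Lemma~\ref{thm:1}.

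For the convergence claim, I would avoid bounding $\Vert T_\mu^\ell J\Vert$ directly, because by Lemma~\ref{thm:con} the iterates $T_\mu^\ell J$ approach the fixed point $J_\mu$ rather than $0$, so the summands themselves do not decay. The device is to recenter at $J_\mu$. Using $(1-\lambda)\sum_{\ell=1}^n\lambda^{\ell-1}=1-\lambda^n$, write
\begin{equation*}
T^{(\lambda_n)}_\mu J=(1-\lambda)\sum_{\ell=1}^n\lambda^{\ell-1}\big(T_\mu^\ell J-J_\mu\big)+(1-\lambda^n)J_\mu,
\end{equation*}
so that the second term converges in norm to $J_\mu$. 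For the first term, Assumption~\ref{asm:contra} gives $\Vert T_\mu^\ell J-J_\mu\Vert=\Vert T_\mu^\ell J-T_\mu^\ell J_\mu\Vert\le\alpha^\ell\Vert J-J_\mu\Vert$, hence the $\ell$-th summand has norm at most $(1-\lambda)(\lambda\alpha)^{\ell-1}\alpha\Vert J-J_\mu\Vert$. Since $\lambda\alpha<1$, these bounds form a convergent geometric series, so the series of summands is absolutely convergent and, by completeness of $\mathcal{B}(X)$, converges in norm. Therefore $T^{(\lambda_n)}_\mu J$ converges in norm to some $T^{(\lambda_\infty)}_\mu J\in\mathcal{B}(X)$. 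Equivalently, one may verify directly that $\{T^{(\lambda_n)}_\mu J\}$ is Cauchy by applying the same recentering to $\Vert T^{(\lambda_n)}_\mu J-T^{(\lambda_m)}_\mu J\Vert$ for $n>m$.

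For the identification claim, I would appeal to Lemma~\ref{thm:pre2}: norm convergence $T^{(\lambda_n)}_\mu J\to T^{(\lambda_\infty)}_\mu J$ forces pointwise convergence $(T^{(\lambda_n)}_\mu J)(x)\to(T^{(\lambda_\infty)}_\mu J)(x)$ for every $x\in X$. On the other hand, $(T^{(\lambda_n)}_\mu J)(x)$ is precisely the $n$-th partial sum of the series in \eqref{eq:tlambda}, which by Lemma~\ref{thm:1} converges pointwise to $(T^{(\lambda)}_\mu J)(x)$. Since a sequence of reals has at most one limit, $(T^{(\lambda_\infty)}_\mu J)(x)=(T^{(\lambda)}_\mu J)(x)$ for all $x$, i.e., $T^{(\lambda_\infty)}_\mu J=T^{(\lambda)}_\mu J$.

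The main obstacle is the convergence claim, and in particular the realization that the naive Weierstrass bound fails because $T_\mu^\ell J$ does not tend to $0$; recentering at the fixed point $J_\mu$ is exactly what turns the tail into a summable geometric series with ratio $\lambda\alpha$. Once that is secured, the coincidence of the two operators is immediate from uniqueness of limits combined with Lemma~\ref{thm:pre2}.
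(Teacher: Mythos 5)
Your proof is correct, and both halves reach the paper's conclusions: norm convergence via completeness of $\mathcal{B}(X)$, then identification of the limit via Lemma~\ref{thm:pre2} and uniqueness of real limits, which is exactly the paper's second step. For the convergence step, however, your route differs from the paper's, and your stated reason for taking it is a misdiagnosis. You claim the ``naive Weierstrass bound fails because $T_\mu^\ell J$ does not tend to $0$,'' but the summands of the series are $(1-\lambda)\lambda^{\ell-1}T_\mu^\ell J$, not $T_\mu^\ell J$: since $\Vert T_\mu^\ell J\Vert\to\Vert J_\mu\Vert$ the iterate norms are bounded by some $M_\mu$, and the geometric weights alone make the tail summable, $(1-\lambda)\sum_{\ell=N+1}^{N+k}\lambda^{\ell-1}\Vert T_\mu^\ell J\Vert\leq \lambda^N M_\mu\to 0$. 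This direct bound is precisely the paper's proof: it establishes the Cauchy property of $\{T^{(\lambda_n)}_\mu J\}$ from boundedness of $\{\Vert T_\mu^\ell J\Vert\}$ and summability of the weights, without any recentering. Your alternative---subtracting $J_\mu$ using $(1-\lambda)\sum_{\ell=1}^n\lambda^{\ell-1}=1-\lambda^n$ and invoking Assumption~\ref{asm:contra} to get $\Vert T_\mu^\ell J-J_\mu\Vert\leq\alpha^\ell\Vert J-J_\mu\Vert$---is valid and yields the sharper geometric tail ratio $\lambda\alpha$ in place of $\lambda$ (mirroring the modulus computation behind Corollary~\ref{cor:53}), but it is an optimization, not a necessity; note also that it uses the contraction assumption where the paper's argument needs only boundedness of the iterates, so the paper's version is marginally more economical in hypotheses for this step.
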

\begin{proof}
Since $\lim_{n\to\infty}\Vert T^n_\mu J-J_\mu\Vert=0$, we have $\lim_{n\to\infty}\Vert T^n_\mu J\Vert=\Vert J_\mu\Vert$. Therefore $\{\Vert T^n_\mu J\Vert\}$ is bounded. Denote its bound as $M_\mu$. Therefore, $\forall \varepsilon$, $\exists N$ such that $\forall k$ 
\begin{align*}
	&\Vert T^{(\lambda_N)}_\mu J- T^{(\lambda_{N+k})}_\mu J\Vert\nonumber\\
	=&\Vert (1-\lambda)\sum_{\ell=1}^N  \lambda^{\ell-1}T_\mu^{\ell}J- (1-\lambda)\sum_{\ell=1}^{N+k}  \lambda^{\ell-1}T_\mu^{\ell}J\Vert\nonumber\\
	=& \Vert (1-\lambda)\sum_{\ell=N+1}^{N+k}  \lambda^{\ell-1}T_\mu^{\ell}J\Vert\nonumber\\
	\leq& (1-\lambda)\sum_{\ell=N+1}^{N+k} \lambda^{\ell-1}\Vert T_\mu^{\ell}J\Vert\nonumber\\
	\leq& (1-\lambda)\sum_{\ell=N+1}^{N+k} \lambda^{\ell-1}M_\mu\nonumber\\
	\leq&\lambda^N M_\mu\nonumber\\
	\leq&\varepsilon,
	\end{align*}
	which implies $\{T^{(\lambda_n)}_\mu J\}$ is Cauchy. Since $\mathcal{B}(X)$ is complete, then it is also convergent. Denote its limit as $T^{(\lambda_\infty)}_\mu J$. Since convergence in norm implies point-wise convergence and limit in $\mathbb{R}$ is unique, then $\forall x\in X$, it holds that $\big(T^{(\lambda_\infty)}_\mu J\big)(x)=\big(T^{(\lambda)}_\mu J\big)(x)$
\end{proof}
Note that the above result does not stand for the more general operator $T^{(w)}_\mu$, when $X$ has infinite cardinality. The following is an example.
\begin{exmp}
Given $X=\{1,\,2,\,...\}$ and define $w_\ell(x)$ as 
\begin{equation*}
    w_\ell(x)=0,\;\ell\leq x,\;\sum_{\ell=x+1}^\infty w_\ell(x)=1.
\end{equation*}
Further we assume that $v(x)=x$. Define $T_\mu:\mathcal{B}(X)\to\mathcal{B}(X)$ as
\begin{equation*}
    (T_\mu J)(x)=(1-\alpha)x+\alpha J(x).
\end{equation*}
Then one can verify that $J_\mu(x)=x$. Then consider sequence $\{T^{(w_n)}_\mu J_\mu\}$ defined point-wise as
\begin{equation*}
    (T^{(w_n)}_\mu J_\mu\big)(x)=\sum_{\ell=1}^n  w_\ell(x)\big(T_\mu^{\ell}J_\mu\big)(x).
\end{equation*}
which can be verified to belong to $\mathcal{B}(X)$. Then $\forall n$, it holds that 
\begin{align*}
    \Vert T^{(w_n)}_\mu J_\mu -J_\mu\Vert =&\sup_{x\in X}\frac{|\sum_{\ell=1}^n  w_\ell(x)\big(T_\mu^{\ell}J_\mu\big)(x)-J_\mu(x)|}{v(x)}\\
    =&\sup_{x\in X}\frac{|\sum_{\ell=n+1}^\infty  w_\ell(x)J_\mu(x)|}{v(x)}.
\end{align*}
Since $\forall n$, $\exists x$ such that $x>n$. Therefore, we have $\Vert T^{(w_n)}_\mu J_\mu -J_\mu\Vert=1$ for all $n$. This implies the sequence does not converge in norm. Otherwise, its limit in norm at all $x$ would have same values as $J_\mu(x)$.  
\end{exmp}

On the other hand, the monotonicity property of $T^{(w)}_\mu$ and $T^{(\lambda)}_\mu$ follows from the monotonicity property of $T_\mu$. This is summarized in the following theorem. The proof is omitted.
\begin{lemma}\label{thm:3}
	Let the set of mappings $T_\mu:\mathcal{B}(X)\to \mathcal{B}(X)$, $\mu\in\mathcal{M}$, satisfy Assumptions~\ref{asm:contra} and \ref{asm:mon}. Then the mappings $T_\mu^{(w)}:\mathcal{B}(X)\to\mathcal{B}(X)$ defined in Eq.~\eqref{eq:thm1def} is monotonic in the sense that 
	\begin{equation*}
    J\leq J'\implies T_\mu^{(w)}J\leq T_\mu^{(w)}J',\;\forall x\in X,\,\mu\in \mathcal{M}.
\end{equation*}
\end{lemma}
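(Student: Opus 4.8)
The plan is to reduce the monotonicity of $T_\mu^{(w)}$ to that of each finite power $T_\mu^\ell$ and then pass the inequality through the nonnegatively weighted infinite sum. First I would establish, by induction on $\ell$, that $J\le J'$ implies $T_\mu^\ell J\le T_\mu^\ell J'$ pointwise for every $\ell\ge 1$. The base case $\ell=1$ is immediate from Assumption~\ref{asm:mon}: since $(T_\mu J)(x)=H(x,\mu(x),J)$, monotonicity of $H$ in its third argument yields $T_\mu J\le T_\mu J'$. For the inductive step, assuming $T_\mu^\ell J\le T_\mu^\ell J'$, I apply $T_\mu$ once more and invoke monotonicity again to obtain $T_\mu^{\ell+1}J\le T_\mu^{\ell+1}J'$.

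Next, fixing $x\in X$, I would use the nonnegativity of the weights $w_\ell(x)$ to conclude that $w_\ell(x)\big(T_\mu^\ell J\big)(x)\le w_\ell(x)\big(T_\mu^\ell J'\big)(x)$ for every $\ell$, so that every partial sum satisfies $\sum_{\ell=1}^n w_\ell(x)\big(T_\mu^\ell J\big)(x)\le \sum_{\ell=1}^n w_\ell(x)\big(T_\mu^\ell J'\big)(x)$. By Lemma~\ref{thm:1} both partial-sum sequences converge in $\mathbb{R}$, and since a weak inequality between sequences is preserved under taking limits, I obtain $\big(T_\mu^{(w)}J\big)(x)\le \big(T_\mu^{(w)}J'\big)(x)$. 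As $x$ was arbitrary, this gives $T_\mu^{(w)}J\le T_\mu^{(w)}J'$, which is the claim; the corresponding statement for $T_\mu^{(\lambda)}$ follows by the specialization $w_\ell(x)=\lambda^{\ell-1}(1-\lambda)$ exactly as in Corollary~\ref{cor:53}.

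There is no substantive obstacle here, which is presumably why the paper omits the proof. The only point requiring a moment's care is that the inequality must be transported across an \emph{infinite} sum rather than a finite one: this is handled by comparing partial sums, which are finite nonnegative combinations and hence trivially order-preserving, and then appealing to the convergence guaranteed by Lemma~\ref{thm:1} together with the fact that non-strict inequalities pass to limits. Observe that the contraction assumption enters only indirectly, namely through Lemma~\ref{thm:1}, to ensure that the series defining $T_\mu^{(w)}$ converges so that the comparison of limits is meaningful; the ordering itself is driven purely by Assumption~\ref{asm:mon} and the sign of the weights.
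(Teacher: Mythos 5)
Your proof is correct and is precisely the argument the paper has in mind when it states that the monotonicity of $T_\mu^{(w)}$ ``follows from the monotonicity property of $T_\mu$'' and omits the details: induction gives monotonicity of each $T_\mu^\ell$, the nonnegative weights preserve the pointwise inequality on partial sums, and Lemma~\ref{thm:1} justifies passing to the limit. Your closing observation that Assumption~\ref{asm:contra} enters only through Lemma~\ref{thm:1}, to make the limiting comparison meaningful, is an accurate reading of why both assumptions appear in the hypothesis.
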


\section{Convergence of $\lambda$-PIR}\label{sec:conv}
We summarize the convergence results of $\lambda$-PIR under the classical contractive model assumptions. 
\begin{theorem}\label{thm:42}
Let Assumptions~\ref{asm:contra}, \ref{asm:mon}, \ref{asm:att}, and \ref{asm:comm} hold. Given $J_0\in\mathcal{B}(X)$ such that $TJ_0\leq J_0$, the sequence $\{J_k\}_{k=0}^\infty$ generated by algorithm \eqref{eq:algm} converges in norm to $J^*$ with probability one. 
\end{theorem}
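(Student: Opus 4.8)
The plan is to combine a deterministic monotone ``sandwich'' argument with a probabilistic argument that exploits the value-iteration branch of \eqref{eq:algm} being selected infinitely often. The deterministic part shows that, on every sample path, the iterates decrease monotonically and stay above $J^*$, so that the error $e_k=\Vert J_k-J^*\Vert$ is nonincreasing; the probabilistic part uses the genuine contraction toward $J^*$ supplied by the value-iteration step (Assumption~\ref{asm:att} guarantees a $\mu^k$ with $T_{\mu^k}J_k=TJ_k$ exists) to drive $e_k$ to zero almost surely.

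First I would prove by induction the invariant that $TJ_k\leq J_k$ and $J^*\leq J_k$ for every $k$, together with $J_{k+1}\leq J_k$. The base case is the hypothesis $TJ_0\leq J_0$, which also yields $J^*\leq J_0$ since then $\{T^nJ_0\}$ is nonincreasing with limit $J^*$ by Lemma~\ref{thm:con}. For the inductive step I treat the two branches separately. In the value-iteration branch $J_{k+1}=T_{\mu^k}J_k=TJ_k$, so $J_{k+1}\leq J_k$, while $J^*=TJ^*\leq TJ_k=J_{k+1}$ by Assumption~\ref{asm:mon} and $J^*\leq J_k$, and $TJ_{k+1}=T(T_{\mu^k}J_k)\leq T_{\mu^k}^2J_k\leq T_{\mu^k}J_k=J_{k+1}$ using $T\leq T_{\mu^k}$ pointwise. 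In the $\lambda$ branch $J_{k+1}=T_{\mu^k}^{(\lambda)}J_k$; since $T_{\mu^k}J_k\leq J_k$ forces $\{T_{\mu^k}^\ell J_k\}_\ell$ to be nonincreasing with limit $J_{\mu^k}$, the convex combination defining $T_{\mu^k}^{(\lambda)}$ gives $J_{\mu^k}\leq J_{k+1}\leq T_{\mu^k}J_k\leq J_k$, and $J_{\mu^k}\geq J^*$ supplies the lower bound. For $TJ_{k+1}\leq J_{k+1}$ I invoke commutativeness (Assumption~\ref{asm:comm}) and monotonicity of $T_{\mu^k}^{(\lambda)}$ (Lemma~\ref{thm:3}): $TJ_{k+1}\leq T_{\mu^k}(T_{\mu^k}^{(\lambda)}J_k)=T_{\mu^k}^{(\lambda)}(T_{\mu^k}J_k)\leq T_{\mu^k}^{(\lambda)}J_k=J_{k+1}$. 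Because $J^*\leq J_{k+1}\leq J_k$ pointwise and $v>0$, dividing by $v(x)$ and taking the supremum yields $e_{k+1}\leq e_k$, so $\{e_k\}$ converges to some $e_\infty\geq 0$.

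Next I would use the randomization. The value-iteration branch is taken at stage $k$ independently with probability $p_k$; under the divergence condition $\sum_k p_k=\infty$, the second Borel--Cantelli lemma guarantees that, with probability one, this branch is selected at infinitely many stages $k_1<k_2<\cdots$. Along such a path $J_{k_j+1}=TJ_{k_j}$, hence $e_{k_j+1}=\Vert TJ_{k_j}-TJ^*\Vert\leq\alpha\,e_{k_j}$ by Assumption~\ref{asm:contra}; combined with the monotonicity $e_{k_{j+1}}\leq e_{k_j+1}$ from the previous step, this gives $e_{k_j}\leq\alpha^{\,j-1}e_{k_1}\to 0$. Since $\{e_k\}$ is nonincreasing, $e_\infty=0$, i.e. $J_k\to J^*$ in norm with probability one.

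The main obstacle is that, with infinitely many policies, the fixed points $J_{\mu^k}$ move with $k$, so neither the finite-policy reasoning behind Theorem~\ref{thm:pre} (eventual constancy of the policy) nor a contraction toward a single fixed target is available for the $\lambda$ branch. The device replacing them is the pair consisting of monotonicity and the initial condition $TJ_0\leq J_0$, which is precisely what makes the deterministic invariant go through and forces $e_k$ to be monotone regardless of which branch fires; the delicate point inside it is the $\lambda$-branch verification of $TJ_{k+1}\leq J_{k+1}$, where commutativeness is essential. The remaining care is to establish the almost-sure infinitude of value-iteration steps under the stated condition on $\{p_k\}$, after which the contraction of the value-iteration step upgrades pointwise monotone decrease into norm convergence to $J^*$.
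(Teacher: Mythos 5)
Your deterministic induction is sound and is in fact exactly the paper's argument: the invariant $TJ_k\leq J_k$ and $J^*\leq J_{k+1}\leq J_k$, the convex-combination bound $J_{\mu^k}\leq T_{\mu^k}^{(\lambda)}J_k\leq T_{\mu^k}J_k$ in the $\lambda$ branch, and the step $TJ_{k+1}\leq T_{\mu^k}\bigl(T_{\mu^k}^{(\lambda)}J_k\bigr)=T_{\mu^k}^{(\lambda)}(T_{\mu^k}J_k)\leq T_{\mu^k}^{(\lambda)}J_k$ via Assumption~\ref{asm:comm} and Lemma~\ref{thm:3} mirror the paper's inequalities \eqref{eq:inprob1}--\eqref{eq:inprob2}. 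The gap is in your probabilistic second part: you prove convergence only under the additional hypothesis $\sum_k p_k=\infty$, which appears nowhere in the theorem --- the algorithm \eqref{eq:algm} only requires $p_k\in(0,1)$. If, say, $p_k=2^{-k}$, the first Borel--Cantelli lemma gives that with probability one the value-iteration branch fires only \emph{finitely} often, and then your argument delivers nothing beyond monotonicity of $e_k=\Vert J_k-J^*\Vert$; you have no mechanism forcing $e_k\to 0$ on such paths. As written, your proof therefore establishes a strictly weaker statement than the theorem.

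The extra hypothesis is also unnecessary, and your own invariant already contains the repair: in \emph{both} branches you showed $J_{k+1}\leq T_{\mu^k}J_k=TJ_k$ (in the $\lambda$ branch because $T_{\mu^k}^{(\lambda)}J_k$ is a convex combination of the nonincreasing iterates $T_{\mu^k}^\ell J_k\leq T_{\mu^k}J_k$), together with $J^*\leq J_{k+1}$. Hence $0\leq J_{k+1}-J^*\leq TJ_k-TJ^*$ pointwise, and dividing by $v(x)$ and taking the supremum gives $e_{k+1}\leq\alpha e_k$ on \emph{every} sample path, since $T$ inherits the contraction modulus $\alpha$ from Assumption~\ref{asm:contra}. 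No randomization argument is needed, and the ``with probability one'' in the statement becomes automatic. The paper packages this same observation as the sandwich $J^*\leq J_k\leq T^kJ_0$, obtained by propagating $J_{k+1}\leq TJ_k$ through the monotonicity of $T$, and then squeezes using $T^kJ_0\to J^*$ from Lemma~\ref{thm:con}(c); either way the convergence is sure and geometric at rate $\alpha$, which is strictly stronger than what a Borel--Cantelli argument along a random subsequence of value-iteration steps can deliver. You should replace your second part by this one-line contraction estimate (or by the paper's squeeze) and drop the condition on $\{p_k\}$.
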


\begin{proof}
Since $TJ_0\leq J_0$, we have $T_{\mu^0}J_0=TJ_0\leq J_0$. By monotonicity of $T_{\mu^0}$, we have 
\begin{equation*}
    T_{\mu^0}^\ell J_0\leq T_{\mu^0}^{\ell-1} J_0,\;T_{\mu^0}^\ell J_0\leq T J_0,\;\forall \ell,
\end{equation*}
which implies that 
\begin{equation*}
    T_{\mu^0}^{(\lambda)}J_0\leq T_{\mu^0}J_0\leq J_0,
\end{equation*}
which means $J_1$ is upper bounded by $TJ_0$ with probability one. In addition, we also have $J_{\mu^0}\geq J^*$ where $J_{\mu^0}$ is the fixed point of both $T_{\mu^0}^{(\lambda)}$ and $T_{\mu^0}$, then we have 
\begin{equation}
    J^*\leq J_{\mu^0}\leq T_{\mu^0}^{(\lambda)}J_0\leq T_{\mu^0}J_0.
\end{equation}
which means $J_1$ is lower bounded by $J^*$ with probability one. Due to uniform-contraction Assumption~\ref{asm:contra}, we have  
\begin{align}
    &T^2 J_0 = T\big(T_{\mu^0}J_0\big)\leq T_{\mu^0}J_0,\label{eq:inprob1}\\
    &T\big(T_{\mu^0}^{(\lambda)}J_0\big)\leq T_{\mu^0}\big(T_{\mu^0}^{(\lambda)}J_0\big)\leq T_{\mu^0}^{(\lambda)}J_0,\label{eq:inprob2}
\end{align}
where in the second inequality \eqref{eq:inprob2}, we relied on the fact that $T_\mu^{(\lambda)}$ and $T_\mu$ can commute, which is due to Assumption~\ref{asm:comm}, and the fact that $T_{\mu^0}^{(\lambda)}$ is monotone, which is due to Lemma~\ref{thm:3}. Therefore, with $TJ_1\leq J_1$ with probability one. Then we can proceed by induction to show that the sequence $\{J_k\}$ is lower bounded by $J^*$ and upper bounded by sequence $\{T^kJ_0\}$ with probability one. Then due to Lemma~\ref{thm:con}, we have $\lim_{k\to\infty}\Vert J_k-J^*\Vert=0$ with probability one. 
\end{proof}

The following result, as a special case of Theorem~\ref{thm:42}, shows that if $H(\cdot,\cdot,\cdot)$ has certain `linear' structure, the initialization condition $TJ_0\leq J_0$ required in Theorem~\ref{thm:42} can be dropped and the same convergence result still stands. The proof is obtained by applying Theorem~\ref{thm:42} and the arguments in \cite{bertsekas1996temporal} and \cite[Chapter 2]{bertsekas1996neuro}.
\begin{corollary}\label{cor:1}
Let $H(\cdot,\cdot,\cdot)$ have the form
\begin{equation}
\label{eq:hlin}
    H(x,u,J)=\int_X \big( g(x,u,y)+\alpha J(y)\big)d \mathbb{P}(y|x,u)
\end{equation}
where $g:X\times U\times X\to\mathbb{R}$, $\alpha\in(0,1)$ and $\mathbb{P}(\cdot|x,u)$ is the probability measure conditioned on $(x,u)$ for certain MDP. Let $v(x)=1$ $\forall x\in X$, and Assumptions~\ref{asm:contra}, \ref{asm:mon}, \ref{asm:att}, and \ref{asm:comm} hold. Given arbitrary $J_0\in\mathcal{B}(X)$, the sequence $\{J_k\}_{k=0}^\infty$ generated by algorithm \eqref{eq:algm} converges in norm to $J^*$ with probability one. 
\end{corollary}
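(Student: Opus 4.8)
The plan is to reduce the arbitrary-initialization claim to Theorem~\ref{thm:42} by a constant-shift argument that exploits the affine dependence of $H$ on $J$ afforded by \eqref{eq:hlin}. The crucial observation is that adding a constant function to the argument merely shifts the output by a scaled constant: since $\mathbb{P}(\cdot|x,u)$ is a probability measure, $T_\mu(J+c\mathbf{1})=T_\mu J+\alpha c\mathbf{1}$, where $\mathbf{1}$ denotes the constant unit function (which lies in $\mathcal{B}(X)$ precisely because $v\equiv1$). Consequently $T(J+c\mathbf{1})=TJ+\alpha c\mathbf{1}$, and iterating this identity and summing the series in \eqref{eq:tlambda} gives $T_\mu^{(\lambda)}(J+c\mathbf{1})=T_\mu^{(\lambda)}J+\alpha_\lambda c\mathbf{1}$, with $\alpha_\lambda=\alpha(1-\lambda)/(1-\lambda\alpha)$ as identified in Corollary~\ref{cor:53}.

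First I would construct a shifted initial point that satisfies the hypothesis of Theorem~\ref{thm:42}. Given arbitrary $J_0\in\mathcal{B}(X)$, set $c_0=\Vert TJ_0-J_0\Vert/(1-\alpha)\geq0$ and $\hat{J}_0=J_0+c_0\mathbf{1}$. Because $v\equiv1$, we have $(TJ_0)(x)-J_0(x)\leq\Vert TJ_0-J_0\Vert=(1-\alpha)c_0$ for every $x$, so $T\hat{J}_0=TJ_0+\alpha c_0\mathbf{1}\leq J_0+c_0\mathbf{1}=\hat{J}_0$. Thus $\hat{J}_0$ meets the condition $T\hat{J}_0\leq\hat{J}_0$, and Theorem~\ref{thm:42} yields a sequence $\{\hat{J}_k\}$ converging in norm to $J^*$ with probability one.

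Next I would couple the run started from $J_0$ with the one started from $\hat{J}_0$ using the same random bits, and show the two sequences differ only by a vanishing constant. The shift-invariance above shows that the policy improvement step selects the same $\mu^k$ for $J_k$ and for $J_k+c_k\mathbf{1}$, since both sides of $T_{\mu^k}J_k=TJ_k$ shift by the same amount $\alpha c_k\mathbf{1}$; hence the coupled runs use identical policies. Writing $\hat{J}_k=J_k+c_k\mathbf{1}$, each evaluation step multiplies the shift by a contraction factor: $c_{k+1}=\alpha c_k$ on the VI branch and $c_{k+1}=\alpha_\lambda c_k$ on the TD branch. Since $0<\alpha_\lambda\leq\alpha<1$, we obtain $0\leq c_k\leq\alpha^k c_0\to0$ along every sample path. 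Finally, using $v\equiv1$ so that $\Vert c_k\mathbf{1}\Vert=c_k$, the triangle inequality $\Vert J_k-J^*\Vert\leq c_k+\Vert\hat{J}_k-J^*\Vert$ together with $c_k\to0$ and the probability-one convergence of $\{\hat{J}_k\}$ yields the claim.

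I expect the main obstacle to be making the coupling rigorous: one must confirm that the same policy is improvement-optimal for both the shifted and unshifted iterates at \emph{every} stage, and that the deterministic shift recursion $c_{k+1}\in\{\alpha c_k,\alpha_\lambda c_k\}$ holds pathwise regardless of the random outcome. Both points rest entirely on the affine structure \eqref{eq:hlin} and on $v\equiv1$, the latter guaranteeing that constant functions have finite, explicitly computable norm; without these features the shift would neither be norm-bounded nor contract uniformly, and the reduction would break down.
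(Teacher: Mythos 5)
Your proposal is correct and follows essentially the same route the paper intends: the paper's proof is exactly an appeal to Theorem~\ref{thm:42} combined with the constant-shift argument of \cite{bertsekas1996temporal} and \cite[Chapter 2]{bertsekas1996neuro}, which you have reconstructed in full detail (shift invariance $T_\mu(J+c\mathbf{1})=T_\mu J+\alpha c\mathbf{1}$, the choice $c_0=\Vert TJ_0-J_0\Vert/(1-\alpha)$ to secure $T\hat{J}_0\leq\hat{J}_0$, identical policy selection along the coupled runs, and the pathwise shift recursion $c_{k+1}\in\{\alpha c_k,\alpha_\lambda c_k\}$ with $\alpha_\lambda\leq\alpha<1$). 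Your careful handling of the coupling --- observing that the shifted version of the realized run is itself a valid run of \eqref{eq:algm} from $\hat{J}_0$ with the same random bits --- fills in precisely the details the paper leaves to the cited references.
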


\begin{remark}
One key insight given in \cite{bertsekas1996temporal} is that when $H$ has `linear' form similar to \eqref{eq:hlin}, a constant shift of the cost function $J$ does not alter the choice of the optimal policy, which justifies the importance of resembling the `shape', rather than the `value', of the optimal costs in the approximation schemes. This is evidently explained in \cite[Chapter 3]{bertsekas2019reinforce}.
\end{remark}

\section{Application to ADP}\label{sec:aadp}
In this section, we exemplify the proposed algorithm for applications of ADP used to solve on-line constrained optimal control problems.

\subsection{Constrained optimal control and ADP}
Consider optimal control problems with
\begin{equation}
    x_{k+1}=f(x_k,u_k),\;H(x,u,J)=g(x,u)+\alpha J(f(x,u)),
\end{equation}
where $X\subset\mathbb{R}^n$ and $U\subset\mathbb{R}^m$ are compact sets, and $v(x)=1$ $\forall x\in X$. In addition, we assume the distribution of $x_0$, denoted as $\mathcal{X}_0$, is given. We denote collectively the problem data as $\mathbf{D}$. Assume $\mathbf{D}$ fulfills contractive model assumption, then there exists $J^*\in\mathcal{R}(X)$ such that $J^*=TJ^*$. However, it is often intractable to compute $J^*$. Instead, we aim to obtain $\Tilde{J}$, a good estimate of $J^*$. Once $\Tilde{J}$ is available, at every instance $k$, the ADP approach to control the system is to solve online a constrained optimization problem $u_k\in\argmin_{u\in U(x)}H(x_k,u,\Tilde{J})$.

The approximation of $\lambda$-PIR implementation comes from two sources. First, the estimate of $J^*$ often uses some form of parametric approximation. In this case, we consider $\Tilde{J}(x,\theta)$, where $\theta\in \Theta$ is the parameter to be trained. Second, the $T^{(\lambda)}_\mu$ operation on $\Tilde{J}$ can only be performed approximately. 

Here we exemplify an data-driven least square evaluation implementation. Our implementation follows closely the projection by Monte Carlo simulation method detailed in \cite[Section 5.5]{bertsekas2019reinforce}. Similar textbook treatment includes \cite[Chapter 5]{busoniu2017reinforcement}. Denote $\Tilde{J}(\cdot,\theta)$, $\Theta$, $\lambda$, number of training iterations $K$, and probability sequence $\{p_k\}_{k=1}^K$, collectively as $\mathbf{A}$. In addition, denote as $\text{Ber}(\cdot)$ the Bernoulli distribution and as $\text{Ge}(\cdot)$ the geometric distribution. The algorithm is summarized in Algorithm~\ref{alg:1}. At a typical training iteration $k$, the algorithm starts by sampling from $\text{Ber}(p_k)$ to decide by \eqref{eq:algm} if the cost estimate of this iteration is obtained via applying $T_{\mu^k}$ or $T_{\mu^k}^{(\lambda)}$. For every sample pair $(x_0,v)$, the state $x_0$ is drawn from $\mathcal{X}_0$, which is part of the problem data. When the $T_{\mu^k}$ step is chosen, for all $x_0$'s, their corresponding $v$'s are set to equal to $(T_{\mu^k}\Tilde{J})(x_0)$, with $\mu^k$ defined by \eqref{eq:algm}. If $T_{\mu^k}^{(\lambda)}$ is selected, an integer $\ell$ is drawn from $\text{Ge}(\lambda)$ for every $x_0$, and its corresponding $v$ is set to $(T_{\mu^k}^\ell\Tilde{J})(x_0)$. In total, it collects a size of $S$ sample pairs $(x_0,v)$, and updates the parameter $\theta$ by solving a lease square regression problem.  

\begin{algorithm2e}[ht]
\label{alg:1}
\caption{Data-driven $\lambda$-PIR}
\label{alg:lpi}
\DontPrintSemicolon
\KwIn{problem data $\mathbf{D}$, algorithm data $\mathbf{A}$, initial parameter $\theta_0$, sample size $S$}
\KwOut{$\theta$, the trained parameter}
$\theta\leftarrow \theta_0$\;
\For{$k\leftarrow 1$ \KwTo $K$}{
Initialize $\mathbf{x}\in\mathbb{R}^{n\times S}$, $\mathbf{v}\in\mathbb{R}^{S}$, $b\sim\text{Ber}(p_k)$\;
\For{$s\leftarrow 1$ \KwTo $S$}{
$x_0\sim\mathcal{X}_0$\;
\uIf{$b==1$}{$v\leftarrow \inf_{u\in U(x_0)}\big(g(x_0,u)+\alpha \Tilde{J}(f(x_0,u),\theta)\big)$}
\Else{$L\sim\text{Ge}(\lambda)$, $v=0$, $x\leftarrow x_0$\;
\For{$\ell \leftarrow 0$ \KwTo $L-1$}{
$u\in\argmin_{u'\in U(x)}\big(g(x,u')+\alpha \Tilde{J}(f(x,u'),\theta)\big)$,\\
$v\leftarrow v+\alpha^{\ell}g(x,u)$, $x\leftarrow f(x,u)$
}
$v\leftarrow v+\alpha^{L}\Tilde{J}(x,\theta)$
}
$\mathbf{x}_s=x_0$, $\mathbf{v}_s=v$
}
$\theta\in \argmin_{\theta'\in\Theta}\sum_{s\in S}|\Tilde{J}(\mathbf{x}_s,\theta')-\mathbf{v}_s|^2$
}
\end{algorithm2e}

\subsection{Numerical examples}
We apply the proposed algorithm to train the cost function used in ADP for constrained linear and nonlinear systems. Both the training and on-line ADP control problems in the examples are identified as convex and are solved by \texttt{cvxpy} (\cite{cvxpy}).
\begin{exmp}\label{ex:61}
Consider a linear scalar control problem with problem data given as:
\begin{equation*}
    x_{k+1}=x_k-0.5u_k,\,H(x,u,J)=x^2+u^2+0.95J(x-0.5u),\,\Tilde{J}(x,\theta)=ax^2+b,
\end{equation*}
where $\theta=(a,b)$, $X=[-100,100],\,U=[-1,1]$ and $\Theta=\{(a,b)\,|\,a\geq 0\}$. Similar problems have appeared in \cite{wang2015approximate,banjac2019data}. The results are shown in Fig.~\ref{fig:61}, where the performance is greatly improved from initial guess of $\theta$ after 2 iterations.  
\begin{figure}[h]
\centering     
\subfigure[System behavior under ADP controls with different cost function estimate.]{\label{fig:61a}\includegraphics[width=75mm,trim={0 0 0 5mm},clip]{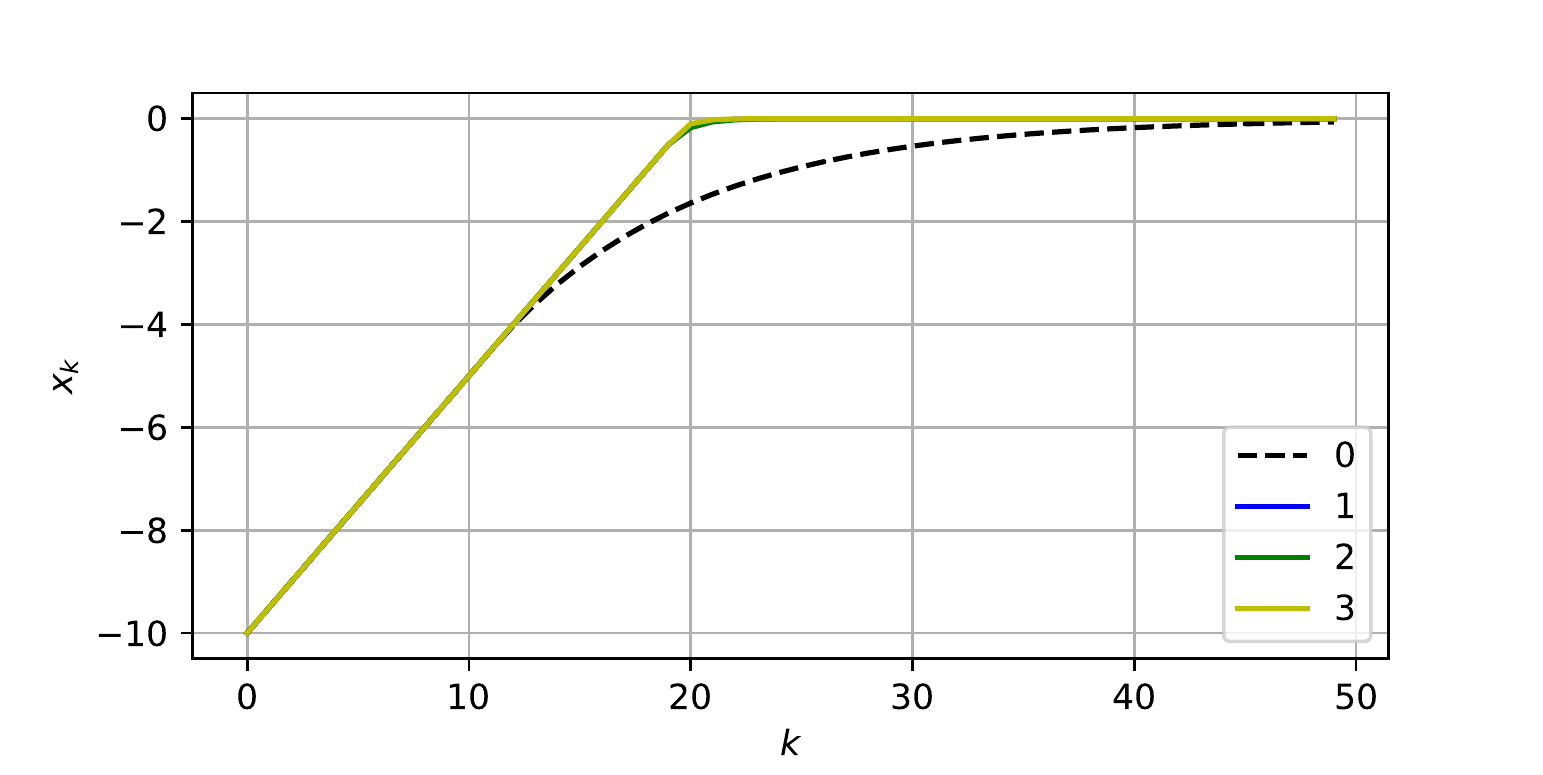}}
\subfigure[System behavior with trained and untrained cost function from different initial states.]{\label{fig:61b}\includegraphics[width=75mm,trim={0 0 0 5mm},clip]{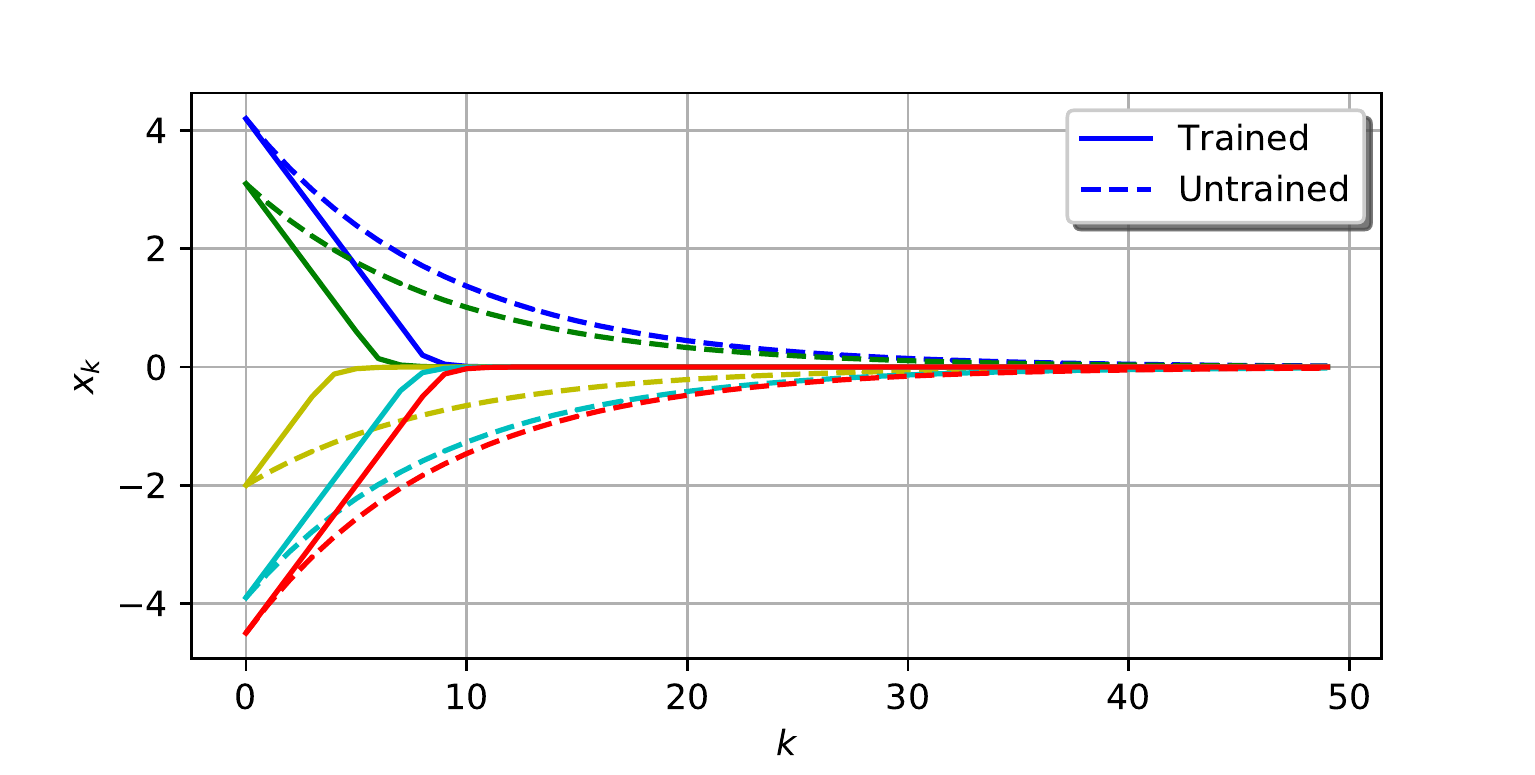}}
\caption{Closed-loop system behavior under ADP control.}
\label{fig:61}
\end{figure}

\end{exmp}

\begin{exmp}\label{ex:62}
Consider a torsional pendulum system with dynamics given as:
\begin{equation*}
    \Dot{\phi}=\omega,\,\Dot{\omega}=M^{-1}(-mgl\sin{\phi}-\gamma\omega+\tau),\,\phi\in(-\pi/2,\pi/2),\,\omega\in[-2,2],\,\tau\in[-1,1],
\end{equation*}
where $m=1/3\emph{ kg}$, $l=3/2\emph{ m}$, $M=4/3ml^2$, $\gamma=0.2$ and $g=9.8\emph{ m}/\emph{s}^2$. The discrete dynamics, denoted as $f(\cdot)$ and used for ADP control, is obtained by forward Euler method with sampling time $0.1\emph{ s}$ where the state is $x=[\phi,\omega]^\emph{T}$ with $\emph{T}$ denoting transpose operation, and the control is $u=\tau$. Then the problem data is given as:
\begin{equation*}
    x_{k+1}=f(x_k,u_k),\,H(x,u,J)=x^\emph{T}Qx+u^\emph{T}Ru+0.95J(f(x,u)),\,\Tilde{J}(x,\theta)=x^\emph{T}Px+b,
\end{equation*}
where $Q$ is identity matrix, $R=0.1$, $\theta=(P,b)$, $X\subset\mathbb{R}^2$, $U=[-1,1]$ and $\Theta=\{(P,b)\,|\,P\succeq0\}$. Similar example has appeared in \cite{si2001online,liu2013policy}. We set $S=100$, $K=5$, $p_k=0.5$ for all $k$, and $\lambda=0.1$ so that the lookahead steps in average is $1/\lambda=10$. The closed loop system behavior with initial $\theta$ and $\theta$ after $5$ iterations are shown in Fig.~\ref{fig:62} where the continuous system dynamics is solved by \emph{\texttt{ode45}}. The control performance is greatly improved.
\begin{figure}[h]
\centering     
\subfigure[With initial guess of $\theta$.]{\label{fig:62a}\includegraphics[width=75mm,trim={0 0 0 5mm},clip]{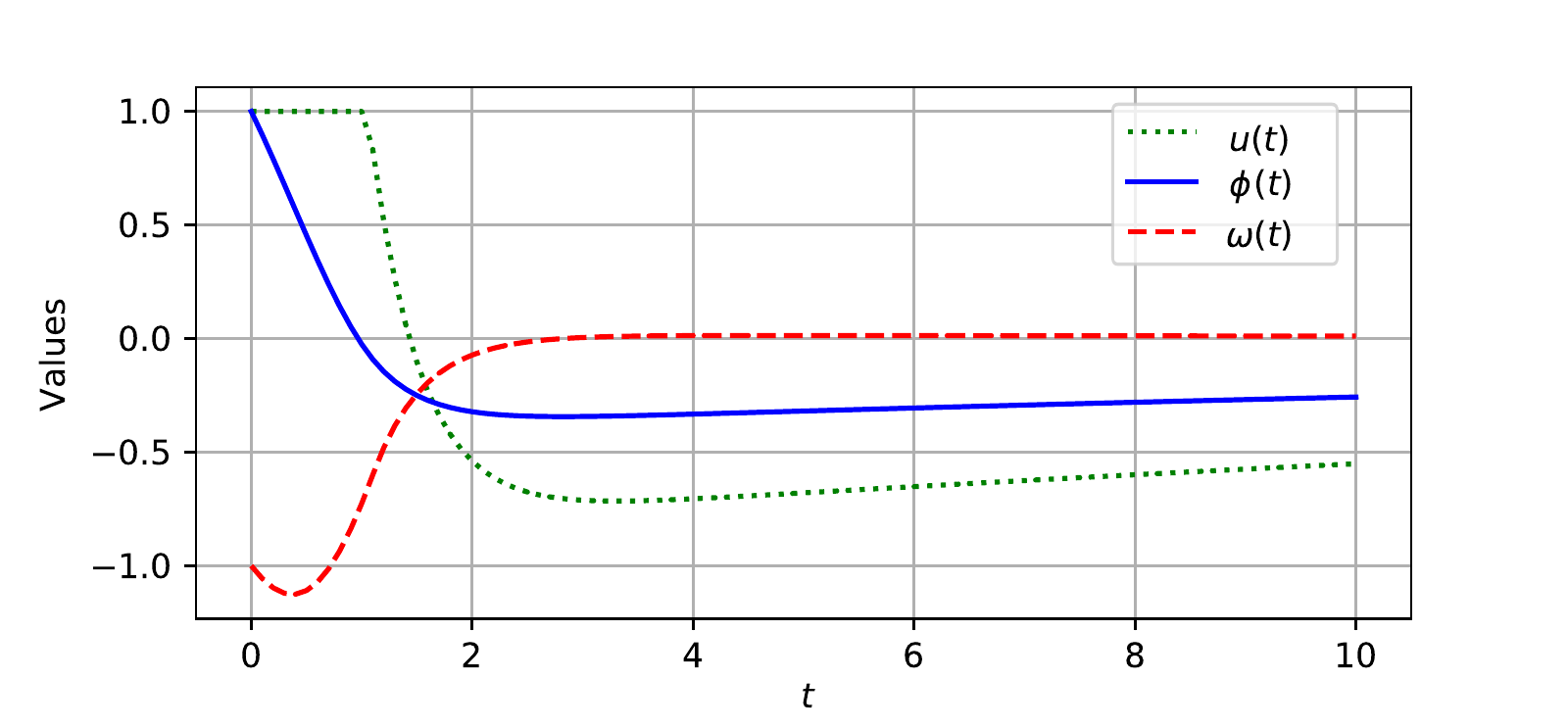}}
\subfigure[With $\theta$ after 5 training iterations.]{\label{fig:62b}\includegraphics[width=75mm,trim={0 0 0 5mm},clip]{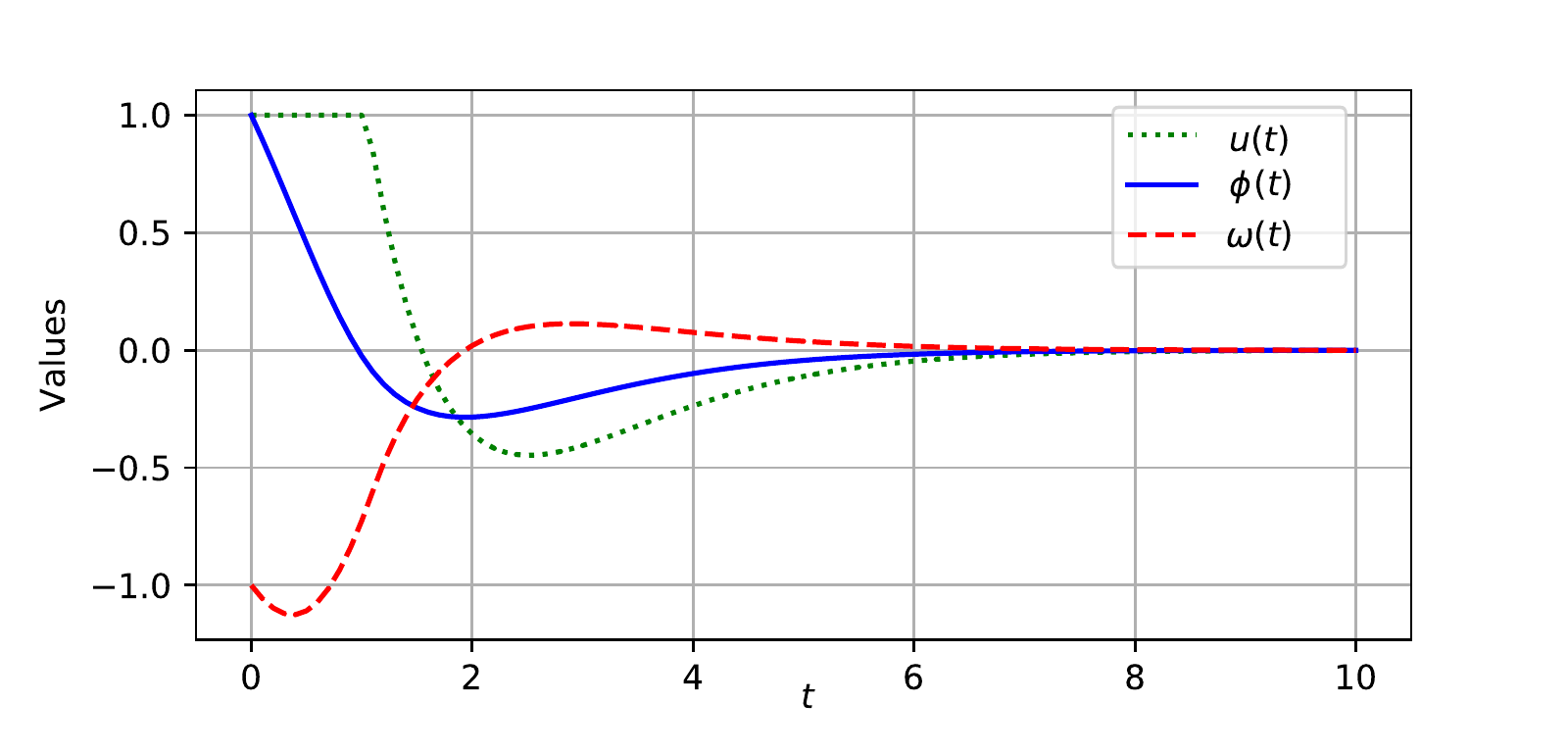}}
\caption{Closed-loop system behavior under ADP control with untrained and trained $\theta$.}
\label{fig:62}
\end{figure}

Here we show the cost function plots along the axes where $\omega=0$ and $\phi=0$, and the cost estimates converged. Besides, we compared the performance of $\lambda$-PIR with approximated implementation of VI where in \eqref{eq:algm} the evaluation step is always applying $T_{\mu^k}\Tilde{J}$, and optimistic policy iteration (OPI) where the evaluation is performed as $T_{\mu^k}^{\ell}\Tilde{J}$ with $\ell$ fixed at $1/\lambda=10$. OPI is analyzed by \cite{scherrer2015approximate} for finite state case and is known to be closed related to $\lambda$-PI. In $\lambda$-PIR, the $T_{\mu^k}^{(\lambda)}\Tilde{J}$ step occurred in the $2$nd iteration, and one can observe a `boost' towards the optimal in Fig.~\ref{fig:62ca}, while VI in Fig.~\ref{fig:62via} is yet to converge in the $5$th iteration. On the other hand, although OPI in Fig.~\ref{fig:62opib} behaves quite similarly to $\lambda$-PIR, it does require more sampling efforts compared to $\lambda$-PIR. The results here imply that $\lambda$-PIR combined the benefits of those two methods. 
\begin{figure}[h]
\centering     
\subfigure[Cost function along the axis $\omega=0$ after different training iterations.]{\label{fig:62ca}\includegraphics[width=75mm,trim={0 0 0 5mm},clip]{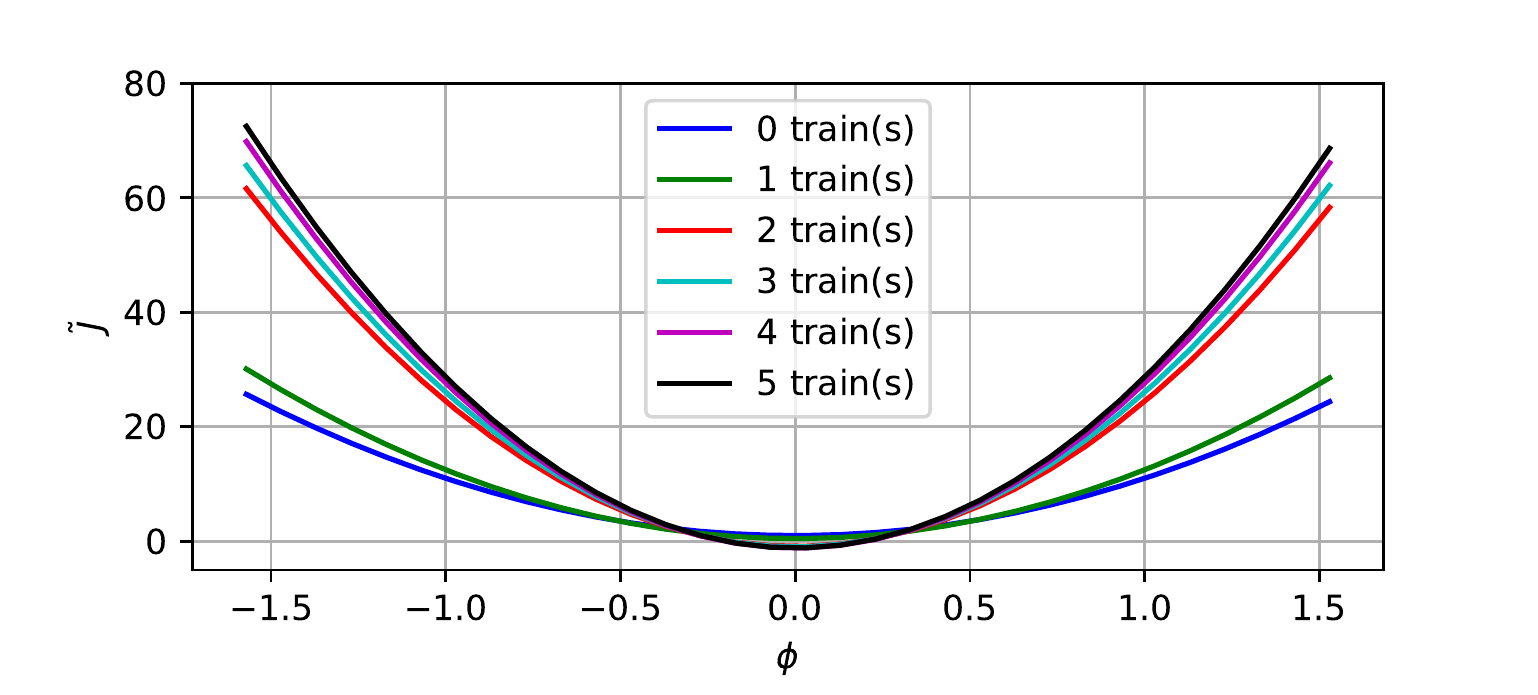}}
\subfigure[Cost function along the axis $\phi=0$ after different training iterations.]{\label{fig:62cb}\includegraphics[width=75mm,trim={0 0 0 5mm},clip]{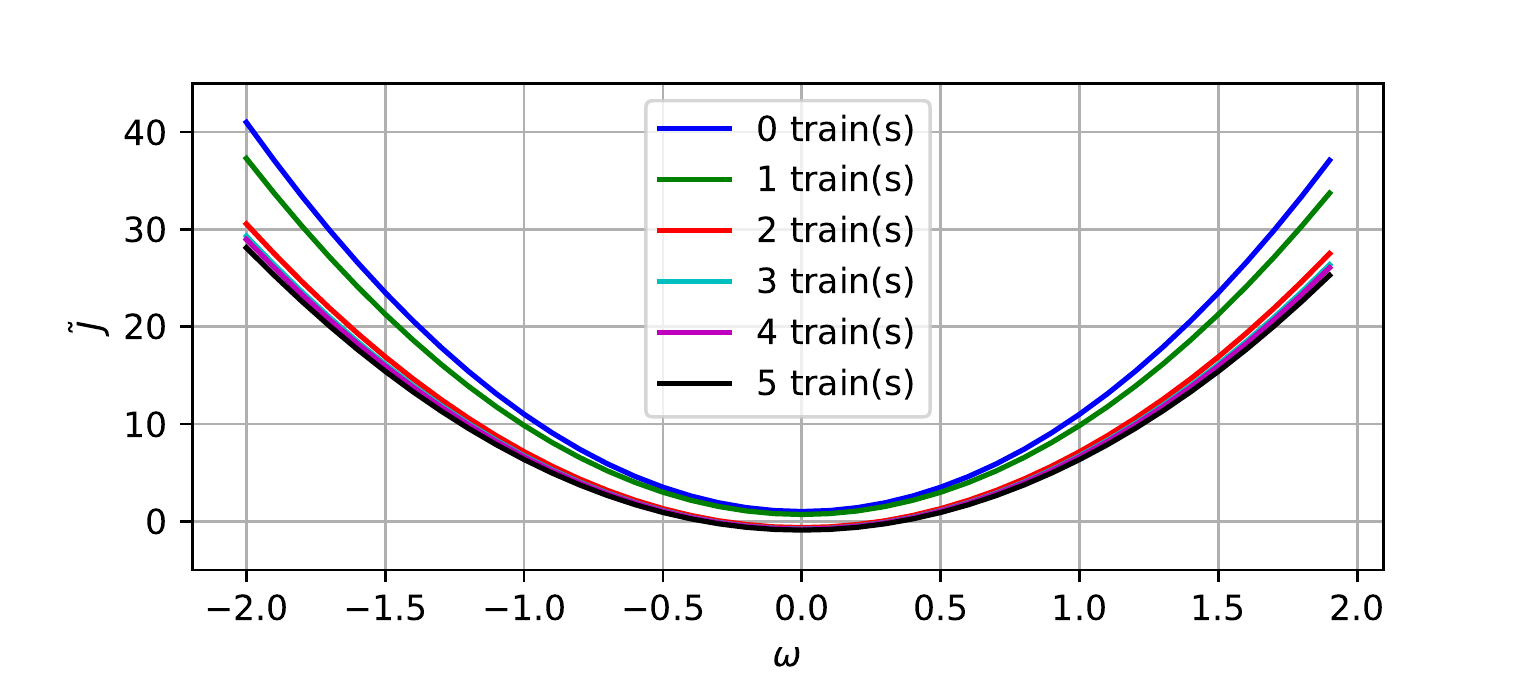}}
\caption{Cost function estimates along the axes $\phi=0$ and $\omega=0$ after different training iterations.}
\label{fig:62c}
\end{figure}

\begin{figure}[h]
\centering     
\subfigure[Cost function along the axis $\omega=0$ for VI.]{\label{fig:62via}\includegraphics[width=75mm,trim={0 0 0 3mm},clip]{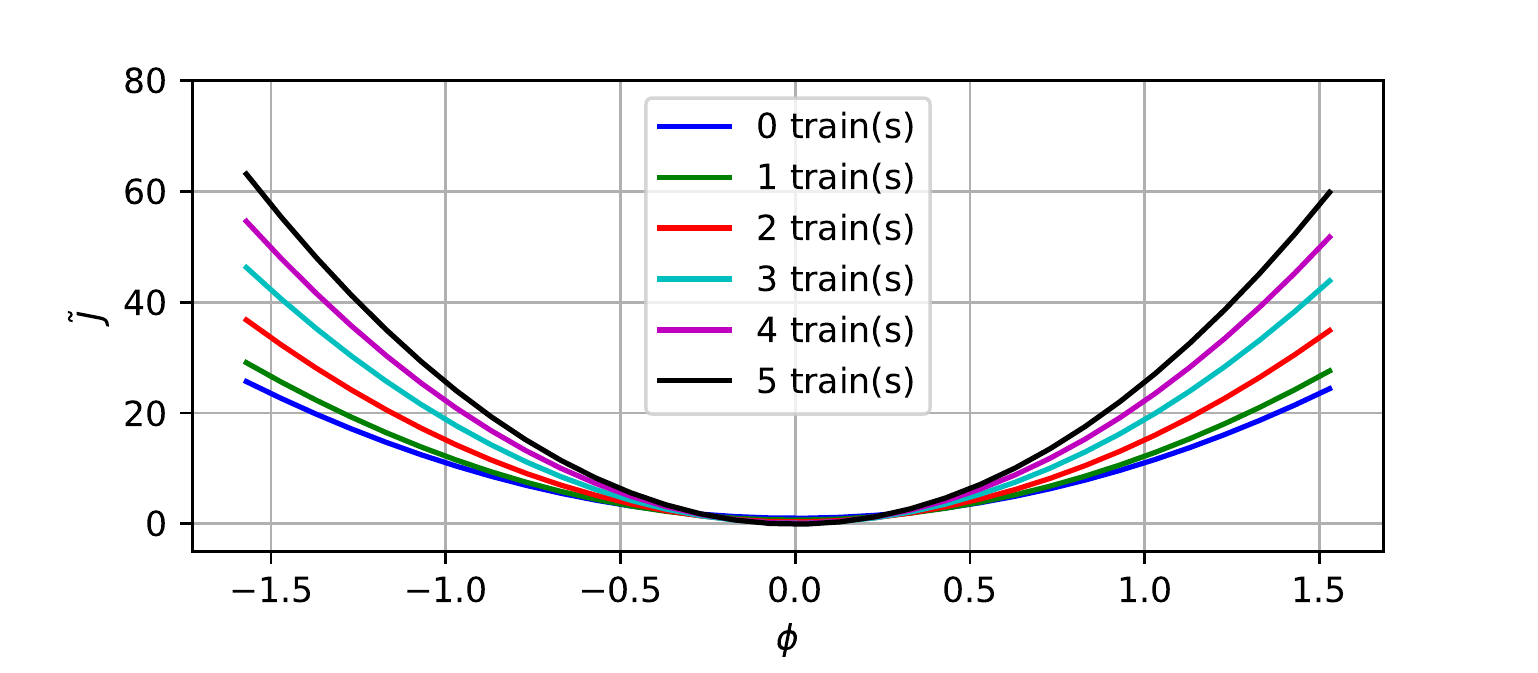}}
\subfigure[Cost function along the axis $\omega=0$ for OPI.]{\label{fig:62opib}\includegraphics[width=75mm,trim={0 0 0 3mm},clip]{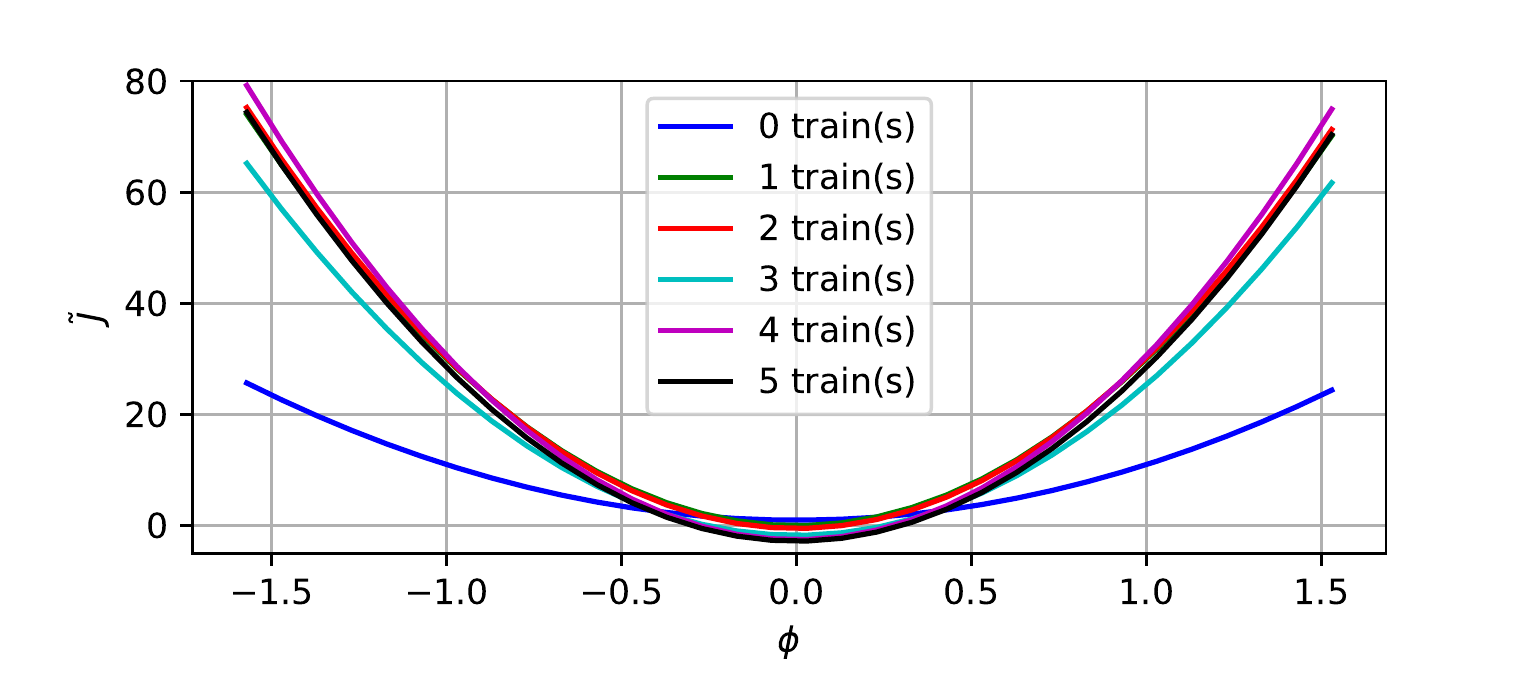}}
\caption{Cost function estimates of VI and OPI along the axis $\omega=0$.}
\label{fig:62c_vipi}
\end{figure}
\end{exmp}

\begin{exmp}
Consider a nonlinear system with dynamics given as
\begin{equation*}
    \Dot{y}=a\sin{z},\,\Dot{z}=-y^2+v,\,y\in[-2,2],\,z\in (-\pi/2,\pi/2),\,v\in[-1,1],
\end{equation*}
where $a$ is some constant. This is Example 13.13 in \cite{khalil2001nonlinear} and the goal is to set $y$ to $1$. By state feedback linearization method, we can obtain a controller given as
\begin{equation*}
    v=y^2-\frac{l_1(y-1)+l_2a\sin{z}}{a\cos{z}}
\end{equation*}
where $l_1$ and $l_2$ are parameters to be designed which impact the closed-loop poles. Denote $x=[y-1,z]^\emph{T}$ and $u=v$, and following the same procedure and using the algorithm data as in Example~\ref{ex:62}, we can obtain a cost estimate $\Tilde{J}(x,\theta)$. Fig.~\ref{fig:63} shows a comparison of system behavior under state feedback linearization and under ADP control. The constant $a$ is set to $1$ and the poles are chosen to be both at $-1$ so that the control constraint is not violated. One can see that the control response in the ADP control case is faster. The cost function estimates along axes $z=0$ and $y=0$ are shown in Fig.~\ref{fig:62c}, where the cost estimates converge after 5 iterations.
\begin{figure}[h]
\centering     
\subfigure[System behavior and control signal with state feedback linearization.]{\label{fig:63a}\includegraphics[width=75mm]{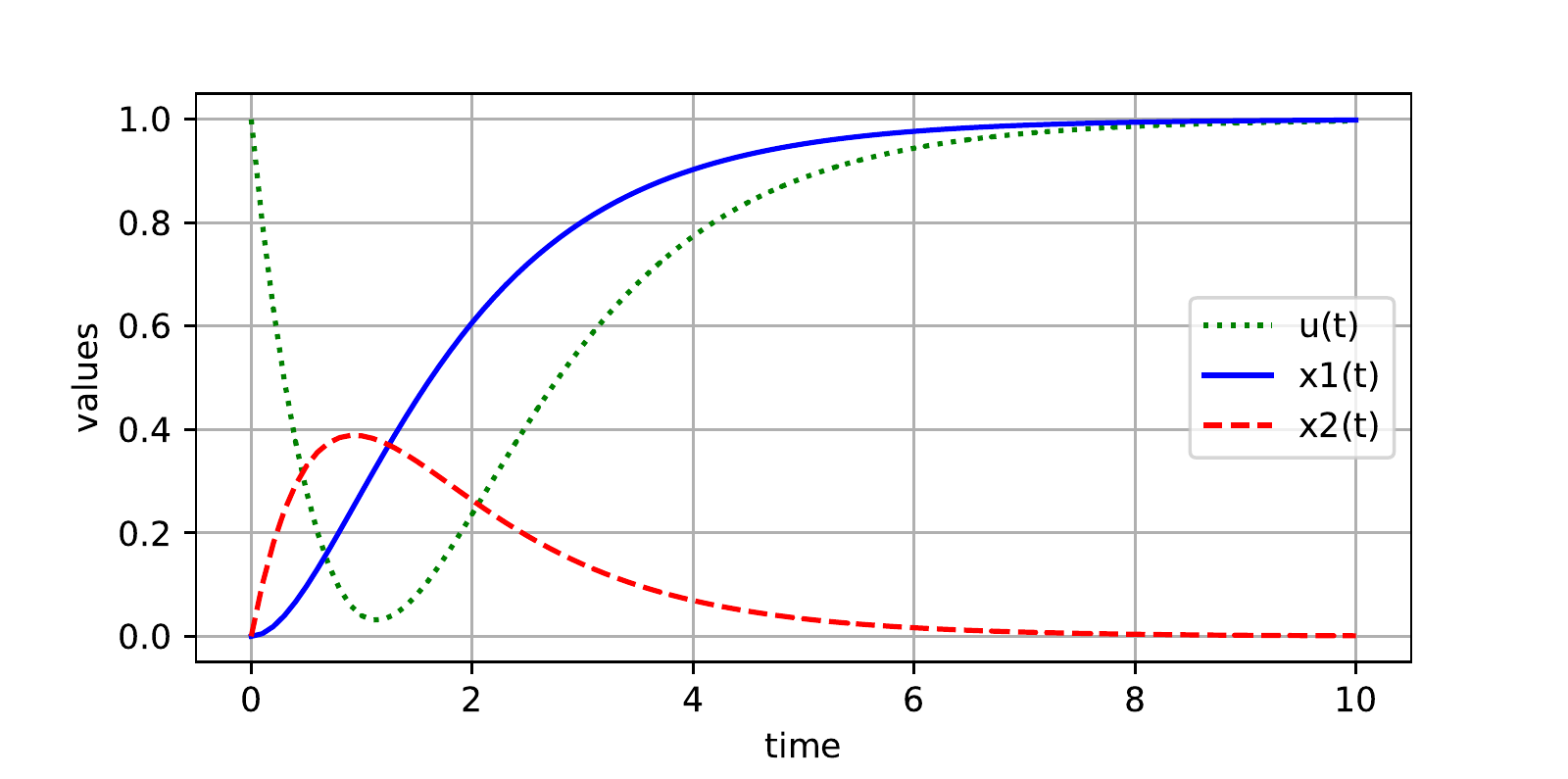}}
\subfigure[System behavior and control signal with ADP control with trained $\theta$.]{\label{fig:63b}\includegraphics[width=75mm]{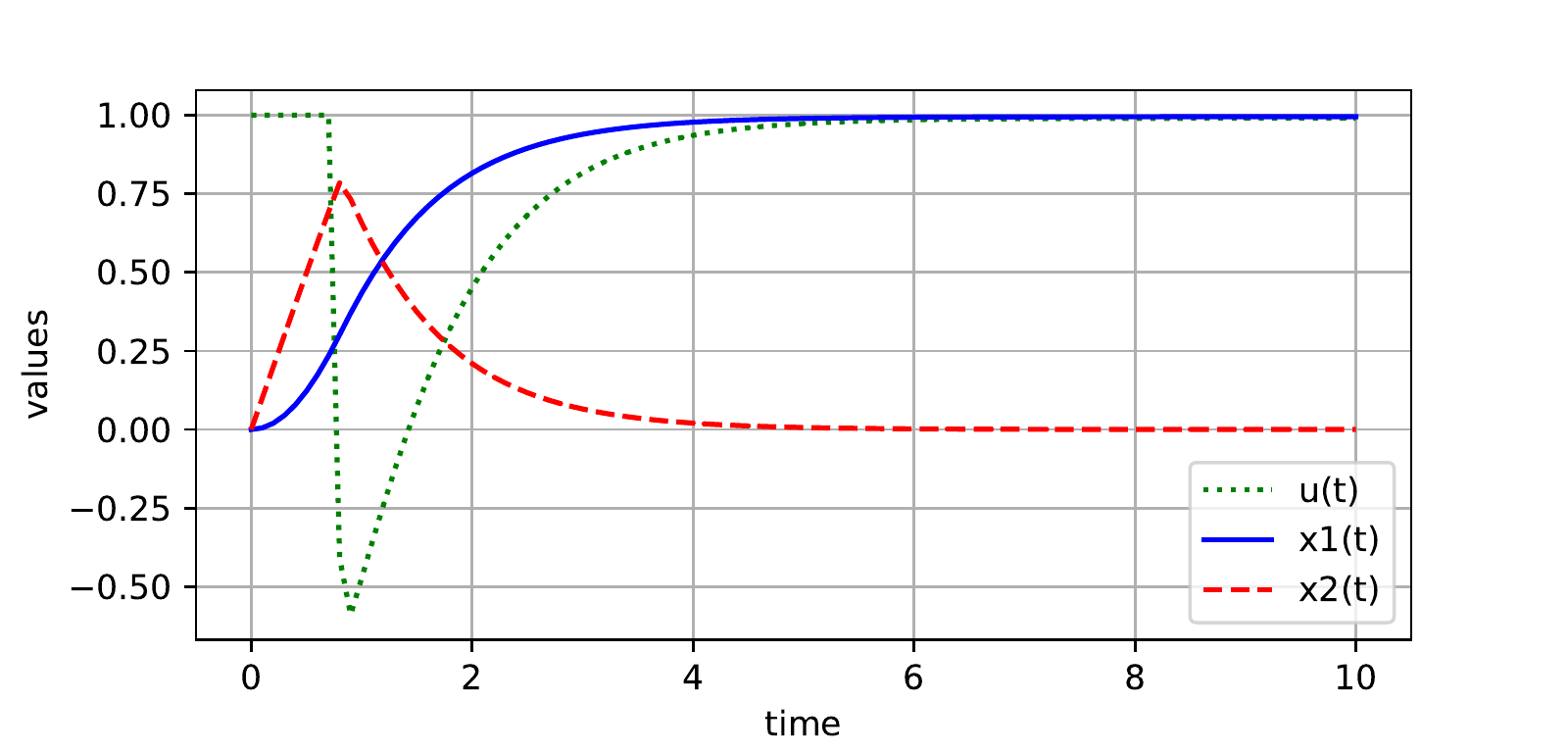}}
\caption{Closed-loop system behavior under state feedback linearization and ADP with trained $\theta$.}
\label{fig:63}
\end{figure}

\begin{figure}[h]
\centering     
\subfigure[Cost function along the axis $z=0$ after different training iterations.]{\label{fig:63ca}\includegraphics[width=75mm]{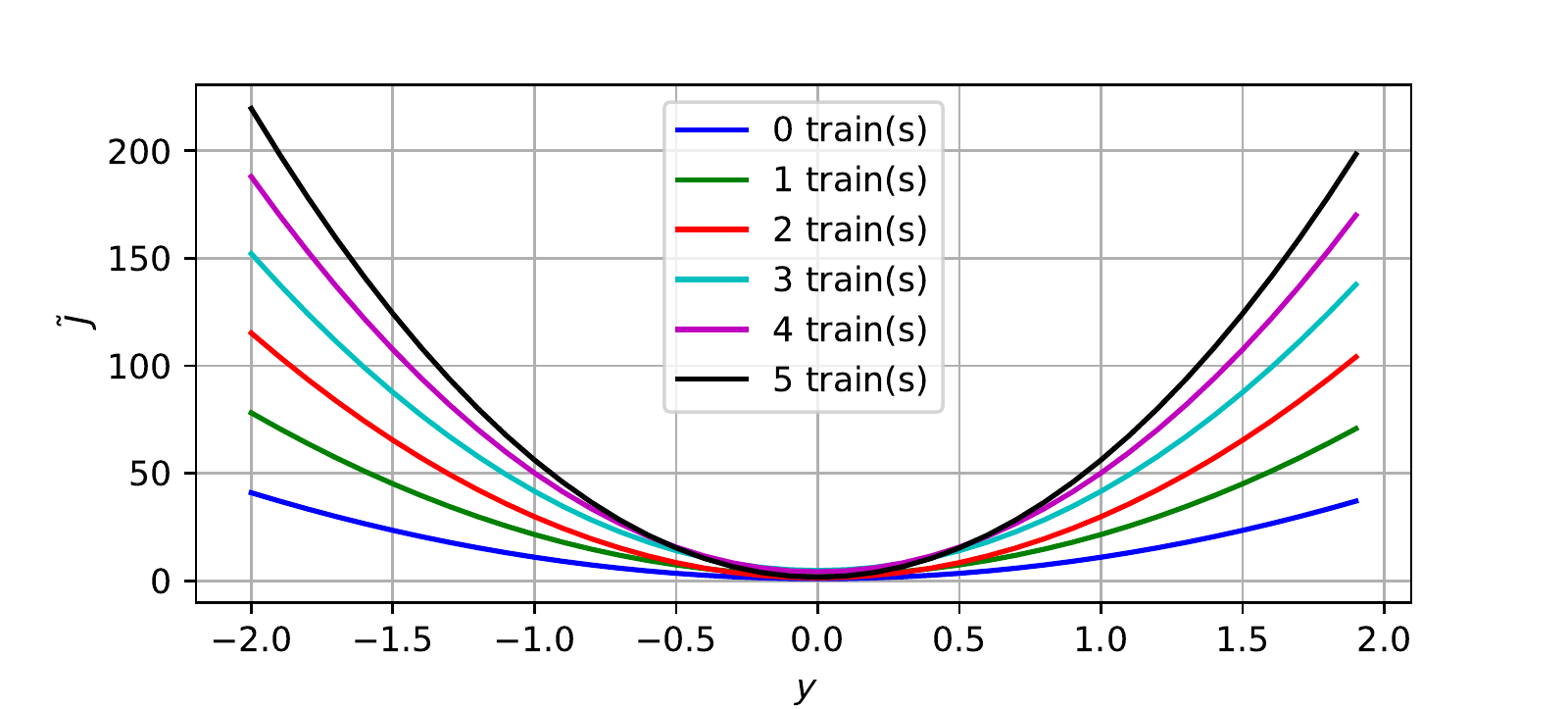}}
\subfigure[Cost function along the axis $y=0$ after different training iterations.]{\label{fig:63cb}\includegraphics[width=75mm]{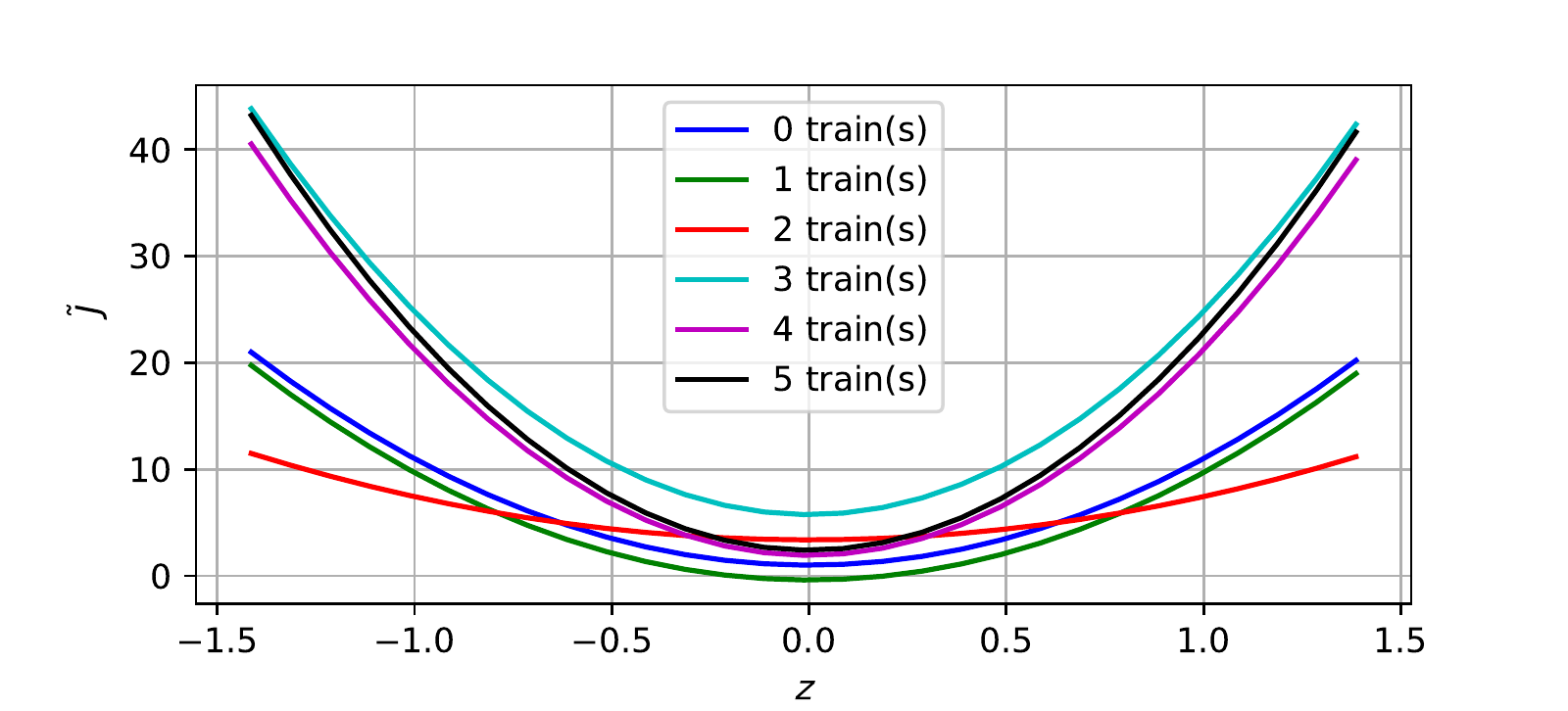}}
\caption{Cost function estimates along the axes where $z=0$ and $y=0$ after different iterations of training.}
\label{fig:63c}
\end{figure}
\end{exmp}

\section{Conclusions}\label{sec:con}
We presented results related to $\lambda$-PIR aided by abstract DP models. The $\lambda$-PIR is originally devised for finite policy problems and our results showed that the algorithm is also well-defined for contractive models with infinite states and the algorithmic convergence can be ensured for problems with infinite policies by adding an additional condition, which can be dismissed if the problem exhibits a linear structure. We exemplified a data-driven approximated implementation of the algorithm to estimate cost functions for constrained optimal control problems and the obtained estimates resulted in good closed-loop behavior when embedded in ADP for online control in numerical examples.   

\acks{This work was supported by the Swedish Foundation for Strategic Research, the Swedish Research Council, and the Knut and Alice Wallenberg Foundation. The authors are grateful to Prof.~Dimitri P. Bertsekas for discussions pointing to abstract DP models and to the connection between $\lambda$-PI and proximal algorithms, and for his suggestions to improve this work. The helpful comments from the reviewers are also acknowledged.}

\bibliography{ref}

\begin{thebibliography}{27}
\providecommand{\natexlab}[1]{#1}
\providecommand{\url}[1]{\texttt{#1}}
\expandafter\ifx\csname urlstyle\endcsname\relax
  \providecommand{\doi}[1]{doi: #1}\else
  \providecommand{\doi}{doi: \begingroup \urlstyle{rm}\Url}\fi

\bibitem[Banjac and Lygeros(2019)]{banjac2019data}
Goran Banjac and John Lygeros.
\newblock A data-driven policy iteration scheme based on linear programming.
\newblock In \emph{2019 IEEE 58th Conference on Decision and Control (CDC)}.
  IEEE, 2019.

\bibitem[Bellemare et~al.(2016)Bellemare, Ostrovski, Guez, Thomas, and
  Munos]{bellemare2016increasing}
Marc~G. Bellemare, Georg Ostrovski, Arthur Guez, Philip~S. Thomas, and R{\'e}mi
  Munos.
\newblock Increasing the action gap: New operators for reinforcement learning.
\newblock In \emph{Thirtieth AAAI Conference on Artificial Intelligence}, 2016.

\bibitem[Bertsekas and Shreve(1978)]{bertsekas1978stochastic}
Dimitiri~P. Bertsekas and Steven Shreve.
\newblock \emph{Stochastic optimal control: the discrete-time case}.
\newblock Academic Press, 1978.

\bibitem[Bertsekas(2012)]{bertsekas2012lambda}
Dimitri~P. Bertsekas.
\newblock Lambda-policy iteration: A review and a new implementation.
\newblock \emph{Reinforcement learning and approximate dynamic programming for
  feedback control}, pages 381--406, 2012.

\bibitem[Bertsekas(2018{\natexlab{a}})]{bertsekas2018abstract}
Dimitri~P. Bertsekas.
\newblock \emph{Abstract dynamic programming}.
\newblock Athena Scientific, 2nd edition, 2018{\natexlab{a}}.

\bibitem[Bertsekas(2018{\natexlab{b}})]{bertsekas2018proximal}
Dimitri~P. Bertsekas.
\newblock Proximal algorithms and temporal difference methods for solving fixed
  point problems.
\newblock \emph{Computational Optimization and Applications}, 70\penalty0
  (3):\penalty0 709--736, 2018{\natexlab{b}}.

\bibitem[Bertsekas(2019)]{bertsekas2019reinforce}
Dimitri~P. Bertsekas.
\newblock \emph{Reinforcement learning and optimal control}.
\newblock Athena Scientific, 2019.

\bibitem[Bertsekas and Ioffe(1996)]{bertsekas1996temporal}
Dimitri~P. Bertsekas and Sergey Ioffe.
\newblock Temporal differences-based policy iteration and applications in
  neuro-dynamic programming. lab. for info. and decision systems report
  lids-p-2349, 1996.

\bibitem[Bertsekas and Tsitsiklis(1996)]{bertsekas1996neuro}
Dimitri~P. Bertsekas and John~N. Tsitsiklis.
\newblock \emph{Neuro-dynamic programming}, volume~5.
\newblock Athena Scientific Belmont, MA, 1996.

\bibitem[Bian and Jiang(2016)]{bian2016value}
Tao Bian and Zhong-Ping Jiang.
\newblock Value iteration, adaptive dynamic programming, and optimal control of
  nonlinear systems.
\newblock In \emph{2016 IEEE 55th Conference on Decision and Control (CDC)},
  pages 3375--3380. IEEE, 2016.

\bibitem[Busoniu et~al.(2017)Busoniu, Babuska, De~Schutter, and
  Ernst]{busoniu2017reinforcement}
Lucian Busoniu, Robert Babuska, Bart De~Schutter, and Damien Ernst.
\newblock \emph{Reinforcement learning and dynamic programming using function
  approximators}.
\newblock CRC press, 2017.

\bibitem[De~Farias and Van~Roy(2003)]{de2003linear}
Daniela~Pucci De~Farias and Benjamin Van~Roy.
\newblock The linear programming approach to approximate dynamic programming.
\newblock \emph{Operations research}, 51\penalty0 (6):\penalty0 850--865, 2003.

\bibitem[Diamond and Boyd(2016)]{cvxpy}
Steven Diamond and Stephen Boyd.
\newblock {CVXPY}: A {P}ython-embedded modeling language for convex
  optimization.
\newblock \emph{Journal of Machine Learning Research}, 17\penalty0
  (83):\penalty0 1--5, 2016.

\bibitem[Khalil(2001)]{khalil2001nonlinear}
Hassan~K. Khalil.
\newblock \emph{Nonlinear systems}.
\newblock Pearson, 3rd edition, 2001.

\bibitem[Liu and Wei(2013)]{liu2013policy}
Derong Liu and Qinglai Wei.
\newblock Policy iteration adaptive dynamic programming algorithm for
  discrete-time nonlinear systems.
\newblock \emph{IEEE Transactions on Neural Networks and Learning Systems},
  25\penalty0 (3):\penalty0 621--634, 2013.

\bibitem[Scherrer et~al.(2015)Scherrer, Ghavamzadeh, Gabillon, Lesner, and
  Geist]{scherrer2015approximate}
Bruno Scherrer, Mohammad Ghavamzadeh, Victor Gabillon, Boris Lesner, and
  Matthieu Geist.
\newblock Approximate modified policy iteration and its application to the game
  of {Tetris}.
\newblock \emph{Journal of Machine Learning Research}, 16:\penalty0 1629--1676,
  2015.

\bibitem[Si and Wang(2001)]{si2001online}
Jennie Si and Yu-Tsung Wang.
\newblock Online learning control by association and reinforcement.
\newblock \emph{IEEE Transactions on Neural networks}, 12\penalty0
  (2):\penalty0 264--276, 2001.

\bibitem[Sutton(1988)]{sutton1988learning}
Richard~S. Sutton.
\newblock Learning to predict by the methods of temporal differences.
\newblock \emph{Machine learning}, 3\penalty0 (1):\penalty0 9--44, 1988.

\bibitem[Sutton and Barto(2018)]{sutton2018reinforcement}
Richard~S. Sutton and Andrew~G. Barto.
\newblock \emph{Reinforcement learning: An introduction}.
\newblock MIT press, 2018.

\bibitem[Szepesv{\'a}ri(2010)]{szepesvari2010algorithms}
Csaba Szepesv{\'a}ri.
\newblock \emph{Algorithms for reinforcement learning}.
\newblock Synthesis lectures on artificial intelligence and machine learning.
  Morgan \& Claypool Publishers, 2010.

\bibitem[Thiery and Scherrer(2010)]{scherrer2010least}
Christophe Thiery and Bruno Scherrer.
\newblock Least-squares policy iteration: Bias-variance trade-off in control
  problems.
\newblock In \emph{Proceedings of the 27th International Conference on Machine
  Learning (ICML-10)}, pages 1071--1078, Haifa, Israel, June 2010. Omnipress.

\bibitem[Tsitsiklis and Van~Roy(1997)]{tsitsiklis1997analysis}
John~N. Tsitsiklis and Benjamin Van~Roy.
\newblock An analysis of temporal-diffference learning with function
  approximation.
\newblock \emph{IEEE Transactions on Automatic Control}, 42\penalty0
  (5):\penalty0 674--690, 1997.

\bibitem[Wang et~al.(2015)Wang, O'Donoghue, and Boyd]{wang2015approximate}
Yang Wang, Brendan O'Donoghue, and Stephen Boyd.
\newblock Approximate dynamic programming via iterated bellman inequalities.
\newblock \emph{International Journal of Robust and Nonlinear Control},
  25\penalty0 (10):\penalty0 1472--1496, 2015.

\bibitem[Watkins(1989)]{watkins1989learning}
Christopher John Cornish~Hellaby Watkins.
\newblock \emph{Learning from delayed rewards}.
\newblock PhD thesis, King's College, Cambridge, 1989.

\bibitem[Yu and Bertsekas(2012)]{yu2012weighted}
Huizhen Yu and Dimitri~P. Bertsekas.
\newblock Weighted {Bellman} equations and their applications in dynamic
  programming. laboratory for information and decision systems report
  lids-p-2876, 2012.

\bibitem[Yu and Bertsekas(2015)]{yu2015mixed}
Huizhen Yu and Dimitri~P. Bertsekas.
\newblock A mixed value and policy iteration method for stochastic control with
  universally measurable policies.
\newblock \emph{Mathematics of Operations Research}, 40\penalty0 (4):\penalty0
  926--968, 2015.

\bibitem[Yu et~al.(2018)Yu, Mahmood, and Sutton]{yu2018generalized}
Huizhen Yu, A.~Rupam Mahmood, and Richard~S. Sutton.
\newblock On generalized {Bellman} equations and temporal-difference learning.
\newblock \emph{Journal of Machine Learning Research}, 19\penalty0
  (1):\penalty0 1864--1912, 2018.

\end{thebibliography}

\end{document}